\documentclass[journal,english]{IEEEtran}

\usepackage[style=ieee, doi=false, isbn=false, url=false]{biblatex}
\usepackage{amsmath,amsfonts,algorithmic,algorithm}
\usepackage{array}
\usepackage{textcomp}
\usepackage{stfloats}
\usepackage{url}
\usepackage{hyperref}
\usepackage{verbatim}
\usepackage{graphicx}
\usepackage{amssymb}
\usepackage{mathtools}
\usepackage{amsthm}
\usepackage{xcolor}
\usepackage{physics}
\usepackage{stmaryrd}
\usepackage{bm}
\usepackage{epstopdf}
\usepackage{subfigure}
\usepackage{bbm}
\usepackage[normalem]{ulem}
\usepackage{enumitem}
\usepackage{babel,csquotes}

\allowdisplaybreaks

\DeclareMathOperator*{\argmin}{\arg\,min}

\DeclareMathOperator{\diag}{diag}

\newcommand{\ignore}[1]{}

\newcommand\herm{{\mathsf{H}}}
\newtheorem{lemma}{Lemma}
\newtheorem{prop}{Proposition}
\newtheorem{corollary}{Corollary}

\def\BibTeX{{\rm B\kern-.05em{\sc i\kern-.025em b}\kern-.08em
    T\kern-.1667em\lower.7ex\hbox{E}\kern-.125emX}}

\begin{document}

\title{Atomic Hybrid Sparse/Diffuse Channel Estimation and Cram{\'e}r-Rao Bounds Analysis}

\author{
    \IEEEauthorblockN{Lei Lyu\IEEEauthorrefmark{1}, Maxime Ferreira Da Costa\IEEEauthorrefmark{2}, and Urbashi Mitra\IEEEauthorrefmark{1}}\\
    \IEEEauthorblockA{\IEEEauthorrefmark{1}University of Southern California, Los Angeles, USA}\\
    \IEEEauthorblockA{\IEEEauthorrefmark{2}CentraleSupélec, Université Paris--Saclay, Gif-sur-Yvette, France
    \thanks{This work has been funded in part by one or more of the following: ANR-24-CE48-3094, ARO W911NF1910269, ARO W911NF2410094,  NSF CIF-2311653, NSF CIF-2148313, NSF RINGS 2148313,  ONR N00014-22-1-2363, NSF DBI-2412522, and is also supported in part by funds from federal agency and industry partners as specified in the Resilient \& Intelligent NextG Systems (RINGS) program).}}
}

\maketitle

\begin{abstract}
In this paper, an atomic hybrid sparse/diffuse (aHSD) channel model in the frequency domain is proposed. Based on a structural analysis of the resolvable paths and diffuse scattering statistics, the Hybrid Atomic-Least-Squares (HALS) algorithm is designed to estimate sparse/diffuse components with a combined atomic and $\ell_2$ regularization. A theoretical analysis of the Lagrange dual problem is conducted, and the conditions required for primal and dual solutions are provided, supporting an off-the-grid delay-time estimator. 
The Cramér--Rao Bound (CRB) analysis in this paper focuses on the estimation of the channel parameters, resulting in a bound on the aggregate channel. Lower and upper bounds for the CRB on parameters are derived as functions of the minimum separations between frequency parameters. Numerical results via simulations on synthetic and real data validate the efficacy of the HALS estimation strategy and show the improved predictive ability of the CRB analysis for the performance of HALS versus previously considered bounds. 
\end{abstract}

\begin{IEEEkeywords}
Channel estimation, convex optimization, atomic norm, source signal separation, Cram{\'e}r-Rao bound
\end{IEEEkeywords}

\section{Introduction}
Many applications within wireless communications rely on high-performance channel estimation. In several key scenarios, wireless communication channels can be modeled as finite-impulse-response filters with a sparse number of taps. In these cases, sparse approximation methods yield high performance 
\cite{choi2017compressed,bajwa2010compressed,Tsai,Beygi}.
However, purely sparse models cannot always capture the channel; in such cases, a mixed model of diffuse and sparse specular components \cite{michelusi2012uwb,michelusi2012uwb2}, is a better fit (see 
\emph{e.g.} \cite{uwbdata,jiang2023long,richter2005joint}). The hybrid sparse/diffuse  model was modified for underwater acoustic channels in \cite{jiang2023long}, where a switching mechanism was designed between sparse and mixed sparse/diffuse channels. Geometric information was incorporated into hybrid channels in the frequency domain in~\cite{santos2010modeling}. 
 
A limitation of the model  in~\cite{michelusi2012uwb,michelusi2012uwb2} is that it was proposed for uniformly sampled dictionaries (\emph{e.g.} time delay, Doppler, or angle). Recent work has shown that further improved channel estimation can be achieved if these parameters are modeled as unknown, but drawn from a \emph{continuous} set. To this end, channel estimation methods based on atomic norm denoising \cite{chandrasekaran2012convex,chi2020harnessing,bhaskar2013atomic,LiLOCMAN,Tsai,Beygi} have proven quite effective.  Here, we combine the advantages of the atomic norm for estimating sparse, specular components with a diffuse model to capture additional components of more realistic channels. We note that applying atomic norm methods directly to the channel model in \cite{michelusi2012uwb,michelusi2012uwb2} does not yield good results because the differences between strong specular multipath components and weaker diffuse components are not taken into account. To better harness the channel structure, we assume the diffuse parts lie in the dictionary spanned by uniformly spaced delays.  In this work, we refine the channel model of \cite{michelusi2012uwb,michelusi2012uwb2}, which we call the \emph{atomic} Hybrid Sparse/Diffuse (aHSD) model, and further design a new channel estimator (Hybrid Atomic Least-Squares), extending the atomic norm framework for this model. 

Given the heterogeneity of the sparse and diffuse components, directly computing a Cramér--Rao Bound (CRB) on the channel is challenging. However, computing a CRB for unknown channel parameters, which characterize the channel response, is straightforward~\cite{moore2007constrained}. The channel parameters are a mixture of real and complex numbers; thus, the complexified CRB~\cite{ollila2008cramer} is needed.
Herein, we develop novel upper and lower bounds on the CRB through recent advances on the condition number of confluent Vandermonde matrices~\cite{Ferreira2022Stability}.  We observe that analyzing the new Hybrid Atomic norm Least-Squares algorithm and the aHSD channel model is not a straightforward extension of prior approaches for sparse approximation and sparse channels.
An interesting observation is that the quality of the diffuse component estimates in HALS appears to have a strong impact on the accuracy of the aHSD channel estimation and influences the occurrence of outliers~\cite{henninger2022probabilistic, muthineni2024outlier} in the sparse component estimates.

The main contributions of this paper are as follows: 
\begin{enumerate}
    \item The  \emph{atomic} Hybrid Sparse/Diffuse(aHSD) model is developed in continuous time and is then converted into a frequency-domain representation using OFDM modulation~\cite{hwang2008ofdm}.
    \item A unified channel estimation algorithm, denoted as the \emph{Hybrid Atomic norm Least-Squares} (HALS) algorithm, is proposed. The algorithm, which combines the atomic-norm and $\ell_2$-norm regularization, estimates the sparse and diffuse components of the aHSD model.
    \item Exact recovery conditions of the channel parameters in the absence of noise and optimality conditions satisfied by primal and dual solutions in the presence of additive noise are derived for HALS.
    \item  The CRBs on the channel parameters for the aHSD model are derived. Lower and upper bounds to these CRBs are further developed 
     which explicitly consider the structure of the aHSD model. 
    \item The proposed HALS algorithm is validated on both synthetic and experimental data \cite{uwbdata} and compared to the derived CRBs. HALS is numerically shown to asymptotically achieve the CRBs, especially for sparse components.
\end{enumerate}   

This paper is organized as follows. The problem formulation is provided in Section~\ref{sec:formulation}.  In particular, Section ~\ref{sec:aHSD} describes our signal and channel model (aHSD), and Section~\ref{sec:ANM} reviews the atomic norm denoising strategy and defines the estimator. 
Exact recovery of the parameters with the atomic norm is provided in~\ref{sec:hals_noiseless} in the absence of noise. The new denoising algorithm for noisy measurements, the Hybrid Atomic Least-Square Algorithm (HALS), is introduced in Section~\ref{sec:proofs}. Theoretical results and performance bounds to assess the optimality of HALS are provided in Section~\ref{sec:proofs} and~\ref{sec:CRB}. Bounds on the CRB for the sparse/diffuse channel model are established in Section~\ref{sec:CRB}. The algorithm is validated numerically in Section~\ref{sec:numerical} using simulated synthetic channels as well as real channel traces \cite{uwbdata}. The performance of HALS is compared to the derived bounds in Section~\ref{sec:CRB}. Conclusions are drawn in Section~\ref{sec:conclusions}. Key proofs are provided in the Appendix.

The following notational conventions are adopted throughout the paper: lower case $x$ denotes scalars, boldface lower case $\bm{x}$ denotes vectors, a variant symbol of the boldface lower case letter $\vb{\bm{x}}$ denotes random vectors and boldface upper case $\bm{X}$ denotes matrices. The vector from $t_1$-th to $t_2$-th entry of $\bm{x}$ is denoted as $\bm{x}_{t_1:t_2}$, where $x_t \triangleq \bm{x}_{t:t}$. The matrix from $t_1$-th to $t_2$-th row and $k_1$-th to $k_2$-th column of $\bm{X}$ is denoted $\bm{X}_{t_1:t_2, k_1:k_2}$, where $X_{t, k} \triangleq \bm{X}_{t:t, k:k}$. 
We denote $x^\star$, $\bm{x}^\star$, and $\bm{X}^\star$ as the {\bf ground truth} of scalar, vector, and matrix parameters that need to be estimated. $\hat{x}\left(\cdot\right)$, $\hat{\bm{x}}\left(\cdot\right)$ and $\hat{\bm{X}}\left(\cdot\right)$ denote estimates of scalar, vector and matrix parameters, respectively. $\left(\cdot\right)^{\top}$, $\overline{\left(\cdot\right)}$, $\left(\cdot\right)^{\herm}$, and $\left(\cdot\right)^\dagger$ are operators of transpose, conjugate, conjugate transpose, and pseudoinverse. 
The integer-valued interval is denoted by $\llbracket a, b\rrbracket =\left\{c \in \mathbb{Z}:a\leq c \leq b\right\}$.

\section{Problem Formulation}
\label{sec:formulation}
\subsection{Atomic Hybrid Sparse/Diffuse Channel Model}
\label{sec:aHSD}
We {\bf adapt} the aHSD model from~\cite{michelusi2012uwb,michelusi2012uwb2}.  We assume that the channel impulse response is decomposed as,
\begin{equation}
    h(t) = h_s(t)+h_d(t),
    \label{eq:aHSD_model_simple}
\end{equation}
where $h_s(t)$ and $h_d(t)$ represent the sparse and diffuse components, respectively. Denote by $m$ and $L$ the number of resolvable multipath components and the number of channel taps, respectively, with $m \ll L$. In order to exploit the hybrid sparse-diffuse model, we impose that the {\bf diffuse} components occur densely on a uniformly tapped delay line as in~\cite{michelusi2012uwb,michelusi2012uwb2}. The overall channel is given by
\begin{align}
    h_s(t)&= \sum_{i=1}^m\alpha_i^\star \delta(t-\tau_i^\star); & h_d(t)&=\sum_{r=0}^{L-1}\gamma_{r+1}^\star \delta(t-r\Delta T),
    \label{eq:aHSD_model}
\end{align}
where $\delta\left(\cdot\right)$ is the Dirac delta function, $\alpha_i^\star, \gamma_r^\star \in \mathbb{C}$ are the complex gains of the corresponding paths; $\tau_i^\star \in \mathbb{R}^+$ is the arbitrary delay time of the $i$-th resolvable path; and $\Delta T \in \mathbb{R}^+$ is the bin interval for the channel taps.  In contrast, the channel model in \cite{michelusi2012uwb,michelusi2012uwb2} assumes that the sparse components can only occur sparsely on a uniformly spaced tapped delay-line.

To estimate the channel, it is assumed that the transmitters send $G$ pilot signals through an OFDM signaling scheme~\cite{hwang2008ofdm}. Channel estimation is done via pilot symbols; the $g$-th pilot signal is given by,
\begin{equation}
    x^{(g)}(t)=\sum\limits_{k=0}^{N-1} s_k^{(g)} e^{j2\pi \left(k-\frac{N-1}{2}\right) \Delta f\, t}, \quad 0 \leq t < T_s,
    \label{eq:ofdm_transmit}
\end{equation}
where the odd integer $N$ denotes the number of subcarriers, $s_k^{(g)} \in \mathbb{C}$ is the transmitted symbol over the $k$-th subcarrier, $\Delta f = \frac{1}{T_s} \in \mathbb{R}$ is the subchannel space and $T_s = L\Delta T \geq \max\left\{\tau_i^\star\right\}_{i=1}^m$ is the symbol duration. Assume that a cyclic prefix (CP) is employed over $\left[-T_c, 0\right]$ before the transmission, with $T_c$ exceeding the delay spread of the channel.
Then, the signal received through the channel can be written as,
\begingroup
\allowdisplaybreaks[0]
    \begin{multline}
    y^{(g)}(t) = \sum_{i=1}^m\alpha_i^\star x^{(g)}\left(t-\tau_i^\star\right) \\+\sum_{r=0}^{L-1}\gamma_{r+1}^\star x^{(g)}\left(t-r\Delta T\right) + n^{(g)}(t),
    \label{eq:ofdm_receive}
\end{multline}
\endgroup
where $n^{(g)}(t)$ is the additive noise.

After matched filtering and sampling, the discrete-time signal received over the $k$-th subcarrier can be expressed as,
\begin{align}
    y_k^{(g)} = \frac{1}{T_s}\int_0^{T_s} y^{(g)}(t)e^{-j2\pi \left(k-\frac{N-1}{2}\right) \Delta f t}dt 
    \triangleq h_k^\star s^{(g)}_k+n^{(g)}_k. 
    \label{eq:ofdm_demodulate}
\end{align}
Stacking the signals from the subcarriers into vectors, we have
\begin{equation}
    \bm{y}^{(g)} = \bm{S}^{(g)}\bm{h}^\star +\bm{n}^{(g)},
   \label{eq:ofdm_relation}
\end{equation}
where $\bm{y}^{(g)}=\left[y_0^{(g)}, y_1^{(g)}, \cdots, y_{N-1}^{(g)}\right]^{\top}$, $\bm{S}^{(g)} = \diag\left(s_0^{(g)}, s_1^{(g)}, \cdots, s_{N-1}^{(g)}\right)$, $\bm{h}^\star=\left[h_0^\star, h_1^\star, \cdots, h_{N-1}^\star\right]^{\top}$ and $\bm{n}^{(g)}=\left[n_0^{(g)}, n_1^{(g)}, \cdots, n_{N-1}^{(g)}\right]^{\top}$.

Next, we further refine the expression of the channel $\bm h^\star$ to distinguish between the contributions of the sparse components and the diffuse ones. First, define the steering vector, $   \bm{a}\left(f\right)$,
\begin{equation}
    \bm{a}\left(f\right) \triangleq \left[ e^{-j2\pi \frac{N-1}{2}f}, e^{-j2\pi \frac{N-3}{2}f}, \cdots, e^{j2\pi \frac{N-1}{2}f} \right]^{\top} \hspace{-8pt}, \; f\in (0,1]
    \label{eq:atom_nophase}
\end{equation}
The contributions from the sparse components in the aHSD channel can be expressed as
\begin{equation}
    \bm{h}_s^\star 
    = \sum\limits_{i=1}^m \alpha_i^\star \bm{a}\left(\frac{T_s-\tau_i^\star}{T_s}\right) \triangleq \sum\limits_{i=1}^m \alpha_i^\star \bm{a}\left(f_i^\star\right),
    \label{eq:sparse_channel}
\end{equation}
where $f_i^\star\triangleq \frac{T_s-\tau_i^\star}{T_s} \in \left(0, 1\right]$. Assuming $m \ll N$, the expression of $\bm{h}_s^\star$ in Equation \eqref{eq:sparse_channel} is sparse over the set $\left\{\bm{a}
\left(f\right):f \in(0,1]\right\}$. By defining $\bm{A}_{\bm{f}^\star} =\left[\bm{a}\left(f_1^\star\right), \bm{a}\left(f_2^\star\right), \cdots, \bm{a}\left(f_m^\star\right)\right]$ and $\bm{\alpha}^\star = \left[\alpha_1^\star, \alpha_2^\star, \cdots, \alpha_m^\star\right]^{\top}$, the sparse components can further be compactly represented as $\bm{h}_s^\star = \bm{A}_{\bm{f}^\star}\bm{\alpha}^\star$.

Similar to sparse components, diffuse components admit decomposition on the steering vector in  Equation \eqref{eq:atom_nophase} of the form
\begin{align}
    \bm{h}_d^\star& = \sum\limits_{r=0}^{L-1} \gamma_{r+1}^\star \bm{a}\left(-\frac{r\Delta T}{T_s}\right) =\sum\limits_{r=0}^{L-1} \gamma_{r+1}^\star \bm{a}\left(-\frac{r}{L}\right),
    \label{eq:diffuse_channel}
\end{align}
where we assume $T_s = L\Delta T$.

The diffuse components represent the aggregate effect of many weak paths spread across arbitrary delays. Although one could model all paths (specular and diffuse) in a single framework with arbitrary delays, this would blur the distinction between strong specular reflections and diffuse scattering.  This approach would yield a dense, non-sparse, channel representation which would reduce the effectiveness of purely sparse approaches, as will be seen in the numerical results presented herein (see Section~\ref{sec:numerical}).

As a tradeoff between computational tractability and model accuracy, we assume that the diffuse components appear at every channel tap, and construct a basis matrix as follows
\begin{equation}
    \bm{D} = \left[\bm{a}\left(0\right), \bm{a}\left( -\frac{1}{L}\right), \cdots,\bm{a}\left(-\frac{L-1}{L}\right)\right]\in \mathbb{C}^{N\times L},
    \label{eq:diffusebasis}
\end{equation}
with which we can represent the diffuse components by $\bm{h}_d^\star = \bm{D}\bm{\gamma}^\star$, where $\bm{\gamma}^\star \triangleq \left[\gamma_1^\star, \gamma_2^\star, \cdots, \gamma_L^\star\right]^{\top}$.
As in \cite{michelusi2012uwb}, we will assume that the diffuse components are of much lower energy than the specular multipath, thus $\left\Vert\bm{\gamma}^\star\right\Vert_2^2$ will be small. This representation captures the spreading and low-energy characteristics of the diffuse components. 

Pre-processing Equation \eqref{eq:ofdm_relation}, we have the $g$-th snapshot expressed as 
\begin{equation}
\bm{S}_{g}^{-1}\bm{y}^{(g)} = \bm{A}_{\bm{f}^\star}\bm{\alpha}^\star+\bm{D}\bm{\gamma}^\star+\bm{S}_g^{-1}\bm{n}^{(g)},
    \label{eq:channel_model}
\end{equation}
for $g \in \llbracket 1, G\rrbracket$, where $G$ is the total number of snapshots. 

To pose our needed optimization, we define
\begin{align}
    \widetilde{\bm{y}} \triangleq \frac{1}{G}\sum\limits_{g=1}^G\bm{S}_g^{-1}\bm{y}^{(g)}
    &= \bm{A}_{\bm{f}^\star}\bm{\alpha}^\star+\bm{D}\bm{\gamma}^\star+ \widetilde{\bm{n}},\label{eq:avg_channel_model}
\end{align}
where $\widetilde{\bm{n}}\triangleq \frac{1}{G} \sum_{g=1}^G\bm{S}_g^{-1}\bm{n}^{(g)}$ is the averaged, filtered noise and $\widetilde{\bm{y}}$ is the averaged, filtered received signal, since
$\argmin\limits_{\bm{h}}\sum\limits_{g=1}^G\left\Vert\bm{S}_g^{-1}\bm{y}^{(g)}-\bm{h}\right\Vert_2^2 = \argmin\limits_{\bm{h}} \left\Vert\widetilde{\bm{y}}-\bm{h}\right\Vert_2^2= \widetilde{\bm{y}}$, the unstructured, least-squares estimate of the channel.

\subsection{Atomic Norm Denoiser}
\label{sec:ANM}
We review the classical atomic norm denoiser to set the stage for our proposed method for estimating the aHSD channel in Equation \eqref{eq:channel_model}.
The atomic norm, according to \cite{chi2020harnessing}, can be defined as
\begin{align}
    \left\Vert\bm{x}\right\Vert_\mathcal{A} \triangleq \inf\left\{\sum\limits_{i}c_i:\bm{x}=c_i\bm{a}_i, c_i>0, \bm{a}_i\in \mathcal{A}\right\},
    \label{atomicrewrite}
\end{align}
where $\mathcal{A}$ is some atomic set. With the steering vector in Equation \eqref{eq:atom_nophase}, we will use herein an atomic set of the form
\begin{align}
    \mathcal{A} &= \left\{\bm{a}_{f, \phi}=e^{j2\pi \phi}\bm{a}\left(f\right): f\in\left[0, 1\right], \phi \in \left[0, 1\right]\right\},
    \label{eq:atomicset}
\end{align}
where $\bm{a}\left(f\right)$ is given in Equation  \eqref{eq:atom_nophase}.
The corresponding dual norm $\left\Vert\bm{\cdot}\right\Vert^*_\mathcal{A}$ is given by
\begin{equation}
    \left\Vert\bm{z}\right\Vert^*_\mathcal{A} = \sup \left\{ \text{Re}\left<\bm{z}, \bm{x}\right>, \; \left\Vert\bm{x}\right\Vert_{\mathcal{A}}\leq 1 \right\}.
\end{equation}

Notice that the expressions of $\bm{h}_s^\star$ and $\bm{h}_d^\star$ given in  Equations \eqref{eq:sparse_channel} and 
\eqref{eq:diffuse_channel} are both decomposable over the continuous dictionary $\mathcal{A}$ in Equation \eqref{eq:atomicset}. Since the delays are unknown, the support of $\bm{h}_s^\star$ is unknown. However, the support of $\bm{h}_d^\star$ {\bf is known} in our setting. Atomic norm minimization can be interpreted as the implicit extension of $\ell_1$-minimization over off-the-grid dictionaries and shown to promote sparse solutions over $\mathcal{A}$ in its canonical settings~\cite{candes2014towards, chi2020harnessing} and many of its extensions~\cite{bhaskar2013atomic,LiLOCMAN}, under the proviso of a good-enough separation between the dictionary elements~\cite{da2018tight, da2020stable}.

\section{Hybrid Atomic-Least-Squares Algorithm}\label{sec:hals}

\subsection{Exact Reconstruction in the Absence of Noise}
\label{sec:hals_noiseless}
In the absence of noise (\emph{i.e.} $\bm n^{(g)} = \bm{0}$), we demix the sparse and diffuse components by solving the problem
\begin{align}\label{eq:atomic_noiseless}
    \min\limits_{\bm \gamma} \; \norm{\widetilde{\bm{y}} - \bm D \bm \gamma}_{\mathcal{A}}.  \tag{P0}
\end{align}
The optimization problem \eqref{eq:atomic_noiseless} is convex and can be reformulated as the semidefinite program (SDP)~\cite{georgiou2007caratheodory,chandrasekaran2012convex}.
    \begin{align}
        \min_{t, \bm{\iota}, \bm{h}_s, \bm{\gamma}} & t+\iota_1\nonumber \tag{P0, SDP}
          \\
        \text{subject to\;}  & \begin{bmatrix} \text{Toep}\left( \bm{\iota}\right) & \bm{h}_s \\
\bm{h}_s^{\herm} & t  \end{bmatrix} \succeq 0 \; \text{ and } \; \bm{x} = \bm{h}_s+\bm{D} \bm{\gamma}.  \nonumber
\end{align}
where $\iota_1$ is the first entry of $ \bm{\iota}$ and $\text{Toep}\left( \bm{\iota}\right)$ is the Hermitian Toeplitz matrix of the vector $\bm{\iota}$.

It is natural to investigate whether the ground truth  $\bm{h}_s^\star$, $\bm{h}_d^\star$ can be recovered from the solutions to Program \eqref{eq:atomic_noiseless}.
A key metric that measures problem feasibility is the {\bf minimum separation} between the supports of the sparse and diffuse components~\cite{Ferreira2022Stability,ferreira2023conditionNumber}. This metric is denoted by $\delta = \min \{\delta_s, \delta_{s-d}\}$, where we define the minimum wrap-around distance between the sparse supports as $\delta_{\mathrm{s}} =\min_{i\neq j} \min_{k \in \mathbb{Z}}\{|f_i - f_j + k|\}$, and the minimum wrap-around distance between the sparse and diffuse supports as $\delta_{\mathrm{s-d}} = \min_{i} \min_{r \in \mathbb{Z}}\{|f_i - \tfrac{r}{L}|\}$.

\begin{prop}[Exact recovery]\label{prop:exact_recovery}
Suppose that $N \geq 10^3$ and that $\delta = \min \{\delta_s, \delta_{s-d}\} \geq 2.52/N$, then $\hat{\bm{\gamma}}=\bm \gamma^\star$ is the unique solution to Program \eqref{eq:atomic_noiseless}. Furthermore, one has $\hat{\bm{h}}_s = \bm y - \bm D \bm \gamma^\star=\bm h_s^\star$ and the decomposition in Equation \eqref{eq:sparse_channel} is the unique atomic decomposition of $\bm h_s^\star$.
\end{prop}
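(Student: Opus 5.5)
The plan is a dual certificate argument, reducing \eqref{eq:atomic_noiseless} to the standard super-resolution certificate of Candès--Fernandez-Granda with the constant $2.52/N$ from the refined analysis of~\cite{da2018tight,da2020stable}. Write $\widetilde{\bm y}=\bm h_s^\star+\bm D\bm\gamma^\star$ and substitute $\bm\gamma=\bm\gamma^\star+\bm\Delta$. The program then becomes $\min_{\bm\Delta}\|\bm h_s^\star-\bm D\bm\Delta\|_{\mathcal A}$. It therefore suffices to show that $\bm\Delta=\bm 0$ is the unique minimizer and that $\bm h_s^\star$ has a unique atomic decomposition. For optimality I will use the subdifferential of the atomic norm. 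If there exists a dual polynomial $q(f)=\langle \bm z,\bm a(f)\rangle$ with
\begin{itemize}
\item $q(f_i^\star)=\mathrm{sign}(\alpha_i^\star)$,
\item $|q(f)|<1$ for $f\notin\{f_i^\star\}$,
\end{itemize}
then $\|\bm z\|_{\mathcal A}^*=1$ and $\langle \bm z,\bm h_s^\star\rangle=\|\bm h_s^\star\|_{\mathcal A}=\sum_i|\alpha_i^\star|$.

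The key extra requirement, compared with pure super-resolution, is that $\bm z$ also annihilate the diffuse dictionary, $\bm D^{\herm}\bm z=\bm 0$. Equivalently, $q(-r/L)=0$ for all $r\in\llbracket 0,L-1\rrbracket$. I will construct $q$ by interpolation on the union of supports $T=\{f_i^\star\}\cup\{-r/L\}$, whose minimum wrap-around separation is exactly $\delta\geq 2.52/N$. Concretely, I take the kernel-based certificate of~\cite{candes2014towards,da2018tight} built on $T$, with prescribed values $\mathrm{sign}(\alpha_i^\star)$ on the sparse support and $0$ on the grid points. I will also impose vanishing derivative at every point of $T$. Under the separation $2.52/N$ and $N\geq 10^3$, that construction is known to give $|q|<1$ off $T$. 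It also gives $|q(f)|\le 1-c N^2(f-t)^2$ near points $t\in T$, and hence $|q|<1$ near the grid points, where $q$ vanishes anyway. Note that the cited certificate theory allows arbitrary unit-modulus-or-smaller prescribed values, so prescribing zeros is admissible. This is the step I expect to be the main obstacle: I need to verify that the bounds in~\cite{da2018tight} permit interpolated values of modulus $0$ rather than $1$, and that $L$ grid points (possibly with $L\sim N$) keep the interpolation system well conditioned. The separation hypothesis forces $L\le N/2.52$, which is what makes this work.

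With the certificate in hand, I will argue as follows. For any $\bm\Delta$,
\[
\|\bm h_s^\star-\bm D\bm\Delta\|_{\mathcal A}\ \ge\ \mathrm{Re}\langle \bm z,\bm h_s^\star-\bm D\bm\Delta\rangle=\mathrm{Re}\langle \bm z,\bm h_s^\star\rangle=\|\bm h_s^\star\|_{\mathcal A},
\]
so $\bm\gamma^\star$ is optimal. For uniqueness, suppose another optimum $\bm h_s^\star-\bm D\bm\Delta=\sum_k c_k e^{j2\pi\phi_k}\bm a(g_k)$ achieves equality. Then $\mathrm{Re}\big(e^{-j2\pi\phi_k}\overline{q(g_k)}\big)=1$ for all $k$ with $c_k>0$, and so every $g_k$ lies in $\{f_i^\star\}$. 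This gives $\bm A_{\bm f^\star}(\bm\beta-\bm\alpha^\star)+\bm D\bm\Delta=\bm 0$. The points in $T$ are distinct and $m+L\le N$, which again follows from separation. Hence $[\bm A_{\bm f^\star},\bm D]$ is a Vandermonde matrix with full column rank, forcing $\bm\Delta=\bm 0$ and $\bm\beta=\bm\alpha^\star$. The same argument with $\bm\Delta=\bm 0$ shows that \eqref{eq:sparse_channel} is the unique atomic decomposition, and $\hat{\bm h}_s=\widetilde{\bm y}-\bm D\bm\gamma^\star=\bm h_s^\star$ follows immediately.
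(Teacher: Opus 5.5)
Your strategy is sound but differs from the paper's, and the step you flag as the main obstacle is left open as written.

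\textbf{The paper's route.} The paper never needs a certificate with zero values. It applies the standard unit-modulus interpolation result \cite[Prop.~2.3]{fernandez2016super} on the union support $\bm\xi=\bm f^\star\cup\{0,\tfrac1L,\dots,\tfrac{L-1}{L}\}$, for every sign pattern. From this it concludes that any vector supported on $\bm\xi$ has atomic norm equal to the $\ell_1$ norm of its coefficients. Hence $\Vert\widetilde{\bm y}-\bm D\bm\gamma\Vert_{\mathcal A}=\Vert\bm\alpha^\star\Vert_1+\Vert\bm\gamma^\star-\bm\gamma\Vert_1$ for every $\bm\gamma$, and uniqueness of $\hat{\bm\gamma}=\bm\gamma^\star$ is immediate.

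\textbf{Your route.} You instead exhibit a single $\bm z\in\partial\Vert\bm h_s^\star\Vert_{\mathcal A}\cap\ker\bm D^{\herm}$, which is exactly the first-order optimality condition of Program~\eqref{eq:atomic_noiseless}. Given such a $\bm z$, your optimality and uniqueness steps are correct: strictness $|q|<1$ off $\bm f^\star$, then full column rank of $[\bm A_{\bm f^\star},\bm D]$ from $m+L\le N$.

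\textbf{The missing piece: existence of that $\bm z$.} The interpolation theorem behind the $2.52/N$, $N\ge 10^3$ constants (the one the paper invokes) prescribes unit-modulus values. Your claim that values of modulus $0$ are admissible is therefore not something you can quote, and you do not prove it. It does follow in one line from the unit-modulus result on $\bm\xi$, with no conditioning analysis in $L$:
\begin{itemize}
\item Let $q_{\pm}(f)=\langle\bm z_{\pm},\bm a(f)\rangle$ be certificates on $\bm\xi$ with values $\mathrm{sign}(\alpha_i^\star)$ at each $f_i^\star$, and $+1$ (for $q_+$) or $-1$ (for $q_-$) at every grid point.
\item Set $\bm z=\tfrac12(\bm z_++\bm z_-)$. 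Then $q$ interpolates $\mathrm{sign}(\alpha_i^\star)$ on $\bm f^\star$ and $0$ on the grid, i.e.\ $\bm D^{\herm}\bm z=\bm 0$.
\item Off $\bm\xi$ you have $|q|\le\tfrac12(|q_+|+|q_-|)<1$, and $q$ vanishes on the grid, so $|q|<1$ off $\bm f^\star$.
\end{itemize}
The theorem applies to $\bm\xi$ because, when $m\ge 1$, the grid spacing obeys $1/L\ge 2\delta_{\mathrm{s-d}}$. So the minimum separation of $\bm\xi$ is exactly $\delta$, and the number of grid points plays no role. With this patch your proof is complete.

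\textbf{Comparison.} The paper's route gives the objective in closed form, hence linear growth in $\Vert\bm\gamma-\bm\gamma^\star\Vert_1$, using only a black-box theorem. Yours isolates the KKT certificate and mirrors the dual Program~\eqref{eq:als_dual}, whose $\bm D^{\herm}\bm z$ term becomes the constraint $\bm D^{\herm}\bm z=\bm 0$ in the noiseless limit. It would also carry over to a known diffuse subspace whose atoms are not mutually separated, where the paper's $\ell_1$ identity breaks down, provided a certificate vanishing on its range can be built.
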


The proof of Proposition~\ref{prop:exact_recovery} is presented in Appendix~\ref{sec:proof_exact_recovery}, and relies on the analysis of optimality conditions for atomic norm minimization (ANM) presented in~\cite{fernandez2016super}, but adapted to our new channel model.

\subsection{HALS Algorithm Design and Analysis}\label{sec:proofs}
In the presence of noise, based on the channel model in Equation \eqref{eq:channel_model} and the analysis of the structure of sparse and diffuse components, the following optimization problem can be formulated to estimate the aHSD channel
\begin{align} \min\limits_{\bm{h}_s, \bm{\gamma}}&\frac{1}{2}\left\Vert \widetilde{\bm{y}}-\bm{h}_s-\bm{D}\bm{\gamma}\right\Vert_2^2 + \tau \left\Vert\bm{h}_s\right\Vert_{\mathcal{A}}+\frac{\lambda}{2}\left\Vert\bm{\gamma}\right\Vert_2^2 \tag{P1},
    \label{eq:als_primal}
\end{align}
where $\tau \geq 0$ and $\lambda \geq 0$ are hyperparameters that need to be determined.

By leveraging the SDP representation of the atomic
norm {and the property of the Lagrangian multiplier}, we can reformulate Program \eqref{eq:als_primal} as
\begin{align}
    \min\limits_{t, \bm{\iota}, \bm{h}_s, \bm{\gamma}} &\left\Vert\widetilde{\bm{y}}-\bm{h}_s-\bm{D}\bm{\gamma}\right\Vert_2^2 + \tau \left(t+\iota_1\right)
    \nonumber
    \\
    \text{subject to~}  & \begin{bmatrix} \text{Toep}\left( \bm{\iota}\right) & \bm{h}_s \\
\bm{h}_s^{\herm} & t  \end{bmatrix} \succeq 0 \; \text{ and } \; \left\Vert\bm{\gamma}\right\Vert_2^2 \leq E_d.\nonumber
\tag{P2}
\label{eq:als_primal_sdp}
\end{align}
 {The hyperparameter $E_d$ can be interpreted as the energy estimate on the diffuse components' amplitude, which should be predetermined so that the inequality $\left\Vert\bm{\gamma}\right\Vert_2^2 \leq E_d$ holds for the ground truth parameter vector $\bm{\gamma}=\bm{\gamma}^\star$}. We can solve Program $\eqref{eq:als_primal_sdp}$ using off-the-shelf convex solvers, \textit{e.g.} CVX \cite{cvx}, \cite{gb08}. We denote the optimal solution as $\hat{\bm{h}}_s$ and $\hat{\bm{\gamma}}$. Then, the frequency support estimate of the sparse components, given $\hat{\bm{h}}_s$ and $\hat{\bm{\gamma}}$, is
\begin{equation}
    \hat{\mathcal{T}} = \left\{\hat{f} \in \left(0, 1\right]:\left\vert\left<\widetilde{\bm{y}}-\hat{\bm{h}}_s-\bm{D}\hat{\bm{\gamma}}, \bm{a}\left(\hat{f}\right)\right>\right\vert = \tau\right\}.
    \label{eq:supportest}
\end{equation}
A debiased estimate $\hat{\bm{h}}_{s, db}$ of the resolvable paths~\cite{bhaskar2013atomic} can be obtained by first constructing the basis matrix $\bm{A}_{\hat{\bm{f}}}$ as $\left[\bm{a}\left(\hat{f}\right), \forall \hat{f} \in \hat{\mathcal{T}}\right] \in \mathbb{C}^{N \times \left\vert\hat{\mathcal{T}}\right\vert}$, then estimating
\begin{equation}
    \hat{\bm{h}}_{s, db} = \bm{A}_{\hat{\bm{f}}}\bm{A}_{\hat{\bm{f}}}^\dagger\left(\widetilde{\bm{y}}-\bm{D}\hat{\bm{\gamma}}\right).
\end{equation}
Note that the debiasing operation is equivalent to projecting the estimate onto the the column space of $\bm{A}_{\hat{\bm{f}}}$, where we define the projection matrix $\bm{P}_{\bm{A}_{\hat{\bm{f}}}}\triangleq\bm{A}_{\hat{\bm{f}}}\bm{A}_{\hat{\bm{f}}}^\dagger$, so that $\hat{\bm{h}}_{s,db}=\bm{P}_{\bm{A}_{\hat{\bm{f}}}}\left(\widetilde{\bm{y}}-\bm{D}\hat{\bm{\gamma}}\right)$.
The solutions to Programs \eqref{eq:als_primal} and \eqref{eq:als_primal_sdp} are our estimators which generalize the classical ANM methods for purely sparse channels (\emph{i.e.} $\bm{h}_d = \bm 0$), which can be recovered by selecting $E_d=0$.

Our key theoretical results assess the correctness of the optimization problem formulation in the context of our new hybrid channel model. 
The properties that the solutions to Program \eqref{eq:als_primal} should satisfy are given as follows.
\begin{prop}
    The optimal solutions, $\hat{\bm{h}}_s$ and $\hat{\bm{\gamma}}$, to Program \eqref{eq:als_primal} satisfy:
    \begin{subequations}
        \begin{align}
       \left\Vert\widetilde{\bm{y}}-\hat{\bm{h}}_s-\bm{D}\hat{\bm{\gamma}}\right\Vert_{\mathcal{A}}^* &\leq \tau
       \label{eq:lemma2_1}
       \\
        \bm{D}^{\herm}\left(\widetilde{\bm{y}}-\hat{\bm{h}}_s-\bm{D}\hat{\bm{\gamma}} \right)&= \lambda \hat{\bm{\gamma}}
        \label{eq:lemma2_2}
        \\
        \text{Re}\left<\widetilde{\bm{y}}-\hat{\bm{h}}_s-\bm{D}\hat{\bm{\gamma}}, \hat{\bm{h}}_s\right> &= \tau\left\Vert\hat{\bm{h}}_s\right\Vert_{\mathcal{A}}.
        \label{eq:lemma2_3}
    \end{align}
    \end{subequations}
    \label{prop:primal_property}
\end{prop}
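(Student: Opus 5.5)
We derive the three conditions from the first-order optimality (KKT) conditions of the convex Program \eqref{eq:als_primal}, treated as an unconstrained convex problem in the real variables $(\bm{h}_s,\bm{\gamma})\in\mathbb{C}^N\times\mathbb{C}^L \cong \mathbb{R}^{2N}\times\mathbb{R}^{2L}$ with the real inner product $\text{Re}\langle\cdot,\cdot\rangle$. Write the residual as $\hat{\bm{r}}\triangleq\widetilde{\bm{y}}-\hat{\bm{h}}_s-\bm{D}\hat{\bm{\gamma}}$. The objective is a sum of a differentiable convex term, $\frac{1}{2}\Vert\widetilde{\bm{y}}-\bm{h}_s-\bm{D}\bm{\gamma}\Vert_2^2+\frac{\lambda}{2}\Vert\bm{\gamma}\Vert_2^2$, and the convex, finite (hence continuous) term $\tau\Vert\bm{h}_s\Vert_{\mathcal{A}}$. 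The subdifferential sum rule therefore applies, and optimality is equivalent to $\bm{0}$ lying in the subdifferential.

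\textbf{Condition \eqref{eq:lemma2_2}.} The objective is differentiable in $\bm{\gamma}$. Its Wirtinger (real) gradient with respect to $\bm{\gamma}$ is $-\bm{D}^{\herm}(\widetilde{\bm{y}}-\bm{h}_s-\bm{D}\bm{\gamma})+\lambda\bm{\gamma}$. Setting this to zero at $(\hat{\bm{h}}_s,\hat{\bm{\gamma}})$ gives \eqref{eq:lemma2_2} directly.

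\textbf{Stationarity in $\bm{h}_s$.} The optimality condition in $\bm{h}_s$ reads $\hat{\bm{r}}\in\tau\,\partial\Vert\hat{\bm{h}}_s\Vert_{\mathcal{A}}$. Here we invoke the standard characterization of the subdifferential of a norm:
\begin{equation*}
\partial\Vert\bm{x}\Vert_{\mathcal{A}}=\left\{\bm{z}:\Vert\bm{z}\Vert_{\mathcal{A}}^*\leq 1,\ \text{Re}\langle\bm{z},\bm{x}\rangle=\Vert\bm{x}\Vert_{\mathcal{A}}\right\}.
\end{equation*}
This uses only that $\Vert\cdot\Vert^*_{\mathcal{A}}$ is the dual norm defined in Section~\ref{sec:ANM} with respect to $\text{Re}\langle\cdot,\cdot\rangle$. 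Scaling by $\tau$ yields \eqref{eq:lemma2_1} and \eqref{eq:lemma2_3} at once.

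\textbf{Elementary route and main obstacle.} For self-containedness, the two $\bm{h}_s$ conditions can also be derived without subdifferential calculus, by perturbation. Comparing the objective at $\hat{\bm{h}}_s$ and at $(1+\epsilon)\hat{\bm{h}}_s$, for small $\epsilon$ of either sign, and sending $\epsilon\to0$ gives $-\text{Re}\langle\hat{\bm{r}},\hat{\bm{h}}_s\rangle+\tau\Vert\hat{\bm{h}}_s\Vert_{\mathcal{A}}=0$, which is \eqref{eq:lemma2_3}. Comparing instead with $\hat{\bm{h}}_s+\epsilon\bm{a}$ for an arbitrary atom $\bm{a}\in\mathcal{A}$ and $\epsilon>0$, and using the triangle inequality $\Vert\hat{\bm{h}}_s+\epsilon\bm{a}\Vert_{\mathcal{A}}\leq\Vert\hat{\bm{h}}_s\Vert_{\mathcal{A}}+\epsilon$, gives $\text{Re}\langle\hat{\bm{r}},\bm{a}\rangle\leq\tau$ for every atom. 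Since the dual atomic norm equals the supremum of $\text{Re}\langle\cdot,\bm{a}\rangle$ over atoms, this is \eqref{eq:lemma2_1}. The only delicate point is this last identity, that the dual norm is attained on the atomic set; it follows because the unit ball of $\Vert\cdot\Vert_{\mathcal{A}}$ is the closed convex hull of $\mathcal{A}$ (the phase $\phi$ makes $\mathcal{A}$ symmetric, and $\mathcal{A}$ is compact), and a linear functional attains its supremum over a convex hull at an extreme point.
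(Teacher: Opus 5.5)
Your proposal is correct, but it is organized differently from the paper's argument.

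\textbf{The paper's route.} Following \cite[Lemma 1]{bhaskar2013atomic}, the paper perturbs along $\hat{\bm{h}}_s+\alpha(\bm{h}_s-\hat{\bm{h}}_s)$ and $\hat{\bm{\gamma}}+\alpha(\bm{\gamma}-\hat{\bm{\gamma}})$. Using convexity of $\Vert\cdot\Vert_{\mathcal{A}}$ and letting $\alpha\to 0$, it obtains one variational inequality valid for all $(\bm{h}_s,\bm{\gamma})$. It then reads off the three conditions from test points:
\begin{itemize}
\item $\bm{h}_s=k\bm{u}_s$ with $k\to\infty$, where $\bm{u}_s$ attains the dual norm, gives \eqref{eq:lemma2_1};
\item arbitrary $\bm{\gamma}$ gives \eqref{eq:lemma2_2};
\item $\bm{h}_s=\bm{0}$, $\bm{\gamma}=\hat{\bm{\gamma}}$ gives $\tau\Vert\hat{\bm{h}}_s\Vert_{\mathcal{A}}\le\text{Re}\langle\hat{\bm{r}},\hat{\bm{h}}_s\rangle$, and the reverse inequality comes from the dual-norm inequality combined with \eqref{eq:lemma2_1}.
\end{itemize}
In effect, the paper re-derives, for this instance, Fermat's rule with the sum rule and the subdifferential formula for a norm. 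You simply cite these facts. Your version has the advantage of stating the conditions exactly as $\hat{\bm{r}}\in\tau\,\partial\Vert\hat{\bm{h}}_s\Vert_{\mathcal{A}}$ plus a vanishing $\bm{\gamma}$-gradient, which by convexity are also sufficient for optimality.

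\textbf{Your elementary route.} It uses different test directions from the paper. Scaling $\hat{\bm{h}}_s$ by $1\pm\epsilon$ yields \eqref{eq:lemma2_3} directly from positive homogeneity, without needing \eqref{eq:lemma2_1}. One-sided perturbations along atoms yield \eqref{eq:lemma2_1}. This avoids the paper's somewhat roundabout ``cannot be minus infinity'' contradiction argument.

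\textbf{The identity you flag as delicate.} The identity $\Vert\bm{z}\Vert^*_{\mathcal{A}}=\sup_{\bm{a}\in\mathcal{A}}\text{Re}\langle\bm{z},\bm{a}\rangle$ is in fact immediate from the definition. If $\bm{x}=\sum_i c_i\bm{a}_i$ with $c_i>0$ and $\sum_i c_i\le 1+\eta$, linearity gives $\text{Re}\langle\bm{z},\bm{x}\rangle\le(1+\eta)\sup_{\bm{a}\in\mathcal{A}}\text{Re}\langle\bm{z},\bm{a}\rangle$. Here the supremum is nonnegative by phase symmetry. Letting $\eta\to 0$ gives the claim, so no compactness or extreme-point argument is needed.
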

\vspace*{-0.3in}
\begin{proof}
See Appendix~\ref{appendix:prop:primal_property}.
\end{proof}
The correctness of the support estimator in Equation \eqref{eq:supportest} can be shown using the conclusions of Proposition~\ref{prop:primal_property}. The sparse components estimate w.r.t. the estimated frequencies $\hat{f}\in\hat{\mathcal{T}}$ is 
 \begin{equation}
     \hat{\bm{h}}_s = \sum\limits_{\hat{f} \in \hat{\mathcal{T}}}c_{\hat{f}}\bm{a}\left(\hat{f}\right).
     \label{eq:sparse_est_decomp}
 \end{equation}
 
Plugging~Equation \eqref{eq:sparse_est_decomp} into~Equation \eqref{eq:lemma2_3} and massaging, we have that
\begin{eqnarray}
    \sum\limits_{\hat{f} \in \hat{\mathcal{T}}}\text{Re}\left(c_{\hat{f}}^*\left<\widetilde{\bm{y}}-\hat{\bm{h}}_s-\bm{D}\hat{\bm{\gamma}}, \bm{a}\left(\hat{f}\right)\right>\right)=\tau\sum\limits_{\hat{f} \in \hat{\mathcal{T}}}\left\vert c_{\hat{f}}\right\vert,
    \label{eq:inner_prod_exp}
\end{eqnarray}
which, together with~Equation \eqref{eq:lemma2_1}, show that to make Equality \eqref{eq:inner_prod_exp} hold, the frequency estimates should satisfy Equation \eqref{eq:supportest}.

The channel estimate from Program \eqref{eq:als_primal}, which is biased, has the squared error $ \mbox{SE}_{b}=\left\Vert\bm{h}^\star_s+\bm{D}\bm{\gamma}^\star-\hat{\bm{h}}_s-\bm{D}\hat{\bm{\gamma}}\right\Vert_2^2 \triangleq \left\Vert\bm{e}_{b}\right\Vert_2^2$. We can express the debiased error as
\begin{align}
\mbox{SE}_{db}
& =\left\Vert\bm{h}_s^\star+\bm{D}\bm{\gamma}^\star-\bm{P}_{\bm{A}_{\hat{\bm{f}}}}\left(\widetilde{\bm{y}}-\bm{D}\hat{\bm{\gamma}}\right)-\bm{D}\hat{\bm{\gamma}}\right\Vert_2^2\nonumber
    \\
    & = \left\Vert\bm{P}^\perp_{\bm{A}_{\hat{\bm{f}}}}\bm{h}_s^\star+\bm{P}^\perp_{\bm{A}_{\hat{\bm{f}}}}\bm{D}\left(\bm{\gamma}^\star-\hat{\bm{\gamma}}\right)\right\Vert_2^2+\left\Vert\bm{P}_{\bm{A}_{\hat{\bm{f}}}}\widetilde{\bm{n}}\right\Vert_2^2\nonumber
    \\
     & = \mbox{SE}_{b}-\left\Vert\bm{P}_{\bm{A}_{\hat{\bm{f}}}}\bm{e}_{b}\right\Vert_2^2+\left\Vert\bm{P}_{\bm{A}_{\hat{\bm{f}}}}\widetilde{\bm{n}}\right\Vert_2^2,
    \label{eq:decomp_mse_debiased1}
\end{align}
where $\bm{P}^\perp_{\bm{A}_{\hat{\bm{f}}}}\triangleq\bm{I}_N-\bm{P}_{\bm{A}_{\hat{\bm{f}}}}$ denotes the projection matrix onto the orthogonal complement of $\bm{A}_{\hat{\bm{f}}}$.

Ideally, the true sparse components lie wholly in the subspace of the estimated sparse components.  Interestingly, this expression suggests that the error in the diffuse components should also lie in the subspace spanned by the estimated sparse components.  As the dimension of the estimated subspace gets larger, the second error term $\left\Vert\bm{P}_{\bm{A}_{\hat{\bm{f}}}}\widetilde{\bm{n}}\right\Vert_2^2$ converges to $\left\Vert\widetilde{\bm{n}}\right\Vert^2_2$, which is the error of the simple Least-Squares (LS) solution. Therefore, we want to operate in a regime where ${\hat{\bm{h}}_s}$ is more sparse.  
\subsection{Dual Problem Formulation}
{We next analyze the dual problem and show strong duality.}
\begin{prop}[Dual Problem]The dual problem of Program \eqref{eq:als_primal} is
\begin{align}
    \max\limits_{\bm{z} \in \mathbb{C}^N} & \frac{1}{2}\left\Vert\widetilde{\bm{y}}\right\Vert_2^2 -\frac{1}{2}\left\Vert\widetilde{\bm{y}}-\bm{z}\right\Vert_2^2-\frac{1}{2\lambda}\left\Vert\bm{D}^{\herm}\bm{z}\right\Vert_2^2\nonumber
    \\
    \emph{\textrm{subject to}} \; &  \left\Vert\bm{z}\right\Vert^*_\mathcal{A} \leq \tau. \tag{P3}
    \label{eq:als_dual}
\end{align}
\label{prop:dual_prolem}
\end{prop}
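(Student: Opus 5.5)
We derive the dual by introducing an auxiliary residual variable and then minimizing the Lagrangian in closed form. Rewrite Program \eqref{eq:als_primal} with a slack $\bm{u} = \widetilde{\bm{y}}-\bm{h}_s-\bm{D}\bm{\gamma}$. The problem becomes minimizing $\frac{1}{2}\|\bm{u}\|_2^2+\tau\|\bm{h}_s\|_{\mathcal{A}}+\frac{\lambda}{2}\|\bm{\gamma}\|_2^2$ subject to this linear equality constraint. Attach a complex multiplier $\bm{z}\in\mathbb{C}^N$ through the real inner product. The Lagrangian is
\begin{align*}
\mathcal{L}(\bm{u},\bm{h}_s,\bm{\gamma},\bm{z}) = \tfrac{1}{2}\|\bm{u}\|_2^2+\tau\|\bm{h}_s\|_{\mathcal{A}}+\tfrac{\lambda}{2}\|\bm{\gamma}\|_2^2 + \text{Re}\left<\bm{z}, \widetilde{\bm{y}}-\bm{u}-\bm{h}_s-\bm{D}\bm{\gamma}\right>.
\end{align*}
The dual function is the infimum over $(\bm{u},\bm{h}_s,\bm{\gamma})$, and it separates into three independent minimizations.

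\emph{Minimization over $\bm{u}$.} The infimum of $\frac{1}{2}\|\bm{u}\|_2^2-\text{Re}\left<\bm{z},\bm{u}\right>$ is attained at $\bm{u}=\bm{z}$ and equals $-\frac{1}{2}\|\bm{z}\|_2^2$.

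\emph{Minimization over $\bm{\gamma}$.} The infimum of $\frac{\lambda}{2}\|\bm{\gamma}\|_2^2-\text{Re}\left<\bm{D}^{\herm}\bm{z},\bm{\gamma}\right>$ is attained at $\bm{\gamma}=\bm{D}^{\herm}\bm{z}/\lambda$ and equals $-\frac{1}{2\lambda}\|\bm{D}^{\herm}\bm{z}\|_2^2$. This requires $\lambda>0$; when $\lambda=0$ it becomes the constraint $\bm{D}^{\herm}\bm{z}=\bm{0}$, read as the limit.

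\emph{Minimization over $\bm{h}_s$.} This is the conjugate of the atomic norm. By the definition of the dual norm $\|\cdot\|_{\mathcal{A}}^*$ and positive homogeneity, $\inf_{\bm{h}_s}\{\tau\|\bm{h}_s\|_{\mathcal{A}}-\text{Re}\left<\bm{z},\bm{h}_s\right>\}$ equals $0$ if $\|\bm{z}\|_{\mathcal{A}}^*\le\tau$ and $-\infty$ otherwise. The finite case follows from $\text{Re}\left<\bm{z},\bm{h}_s\right>\le\|\bm{z}\|^*_{\mathcal{A}}\|\bm{h}_s\|_{\mathcal{A}}$. For the infinite case, take $\bm{h}_s=t\bm{x}$ with $\|\bm{x}\|_{\mathcal{A}}\le1$ and $\text{Re}\left<\bm{z},\bm{x}\right>>\tau$, then let $t\to\infty$.

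Collecting the three pieces, the dual function is $\text{Re}\left<\bm{z},\widetilde{\bm{y}}\right>-\frac{1}{2}\|\bm{z}\|_2^2-\frac{1}{2\lambda}\|\bm{D}^{\herm}\bm{z}\|_2^2$ on the set $\|\bm{z}\|_{\mathcal{A}}^*\le\tau$. Completing the square gives $\text{Re}\left<\bm{z},\widetilde{\bm{y}}\right>-\frac{1}{2}\|\bm{z}\|_2^2=\frac{1}{2}\|\widetilde{\bm{y}}\|_2^2-\frac{1}{2}\|\widetilde{\bm{y}}-\bm{z}\|_2^2$, which is exactly the objective of \eqref{eq:als_dual}.

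Strong duality is what makes this the dual in the useful sense, and it is the only step needing care. The primal is convex with affine equality constraints and finite everywhere, so a weak Slater condition holds trivially, and Slater's theorem gives zero duality gap and dual attainment. Alternatively, the dual objective is strongly concave and the feasible set is compact and contains $\bm{0}$, so the dual maximum is attained.

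Finally, I would check consistency with Proposition~\ref{prop:primal_property}. The stationarity conditions give $\hat{\bm{z}}=\widetilde{\bm{y}}-\hat{\bm{h}}_s-\bm{D}\hat{\bm{\gamma}}$, which together with the $\bm{\gamma}$-stationarity recovers \eqref{eq:lemma2_2}. Dual feasibility recovers \eqref{eq:lemma2_1}, and complementary slackness in the $\bm{h}_s$-step recovers \eqref{eq:lemma2_3}. The main technical point is handling the real inner product convention for complex variables consistently in these stationarity computations.
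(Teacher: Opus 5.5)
Your derivation is correct and reaches (P3) by a different variable splitting from the paper's.

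The paper follows the template of \cite[Lemma 2]{bhaskar2013atomic} more literally. It duplicates both blocks of variables, imposing $\bm{x}=\bm{h}_s$ and $\bm{b}=\bm{\gamma}$ with two multipliers $\bm{z}$ and $\bm{v}$, and completes squares. It then argues that the infimum over $(\bm{h}_s,\bm{\gamma})$ is $-\infty$ unless $\bm{v}=\bm{D}^{\herm}\bm{z}$. Only after this elimination does the $\bm{b}$-step give $-\frac{1}{2\lambda}\Vert\bm{D}^{\herm}\bm{z}\Vert_2^2$ and the $\bm{x}$-step give the dual-norm constraint.

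You put the slack on the residual $\bm{u}=\widetilde{\bm{y}}-\bm{h}_s-\bm{D}\bm{\gamma}$ instead. A single multiplier then suffices, and the Lagrangian splits into three decoupled pieces, each a closed-form conjugate, with no auxiliary multiplier to remove.

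Your splitting has a further benefit. The $\bm{u}$-step forces $\hat{\bm{u}}=\hat{\bm{z}}$, so the KKT conditions immediately identify the dual optimum with the primal residual (Corollary~\ref{cor:dual_primal_residue}). They also yield \eqref{eq:lemma2_1}--\eqref{eq:lemma2_3} in the same stroke. The paper obtains these separately: a first-order (directional-derivative) argument for Proposition~\ref{prop:primal_property}, and a direct check that the dual objective at the residual equals the primal value for the corollary.

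Your strong-duality argument is the same as the paper's: the constraints are affine and the objective is finite everywhere, so Slater's condition holds. Your caveat that (P3) presupposes $\lambda>0$, although (P1) is stated for $\lambda\geq 0$, is also apt.
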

\vspace*{-0.3in}
\begin{proof} 
See Appendix~\ref{appendix:prop:dual_prolem}.
\end{proof}

Strong duality holds between Programs \eqref{eq:als_primal} and \eqref{eq:als_dual}, since Slater’s condition \cite{boyd2004convex} is satisfied in the constrained formulation of Program \eqref{eq:als_primal}. {Notice that the objective function in Program \eqref{eq:als_dual} is strongly concave and the feasible subspace is convex, which indicates the uniqueness of the solution to Program \eqref{eq:als_dual}.}
\begin{corollary}
    $\hat{\bm{z}} = \widetilde{\bm{y}}-\hat{\bm{h}}_s-\bm{D}\hat{\bm{\gamma}}$ is the optimal solution to Program \eqref{eq:als_dual}.
    \label{cor:dual_primal_residue}
\end{corollary}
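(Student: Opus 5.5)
The approach is to verify directly that $\hat{\bm z} \triangleq \widetilde{\bm y}-\hat{\bm h}_s-\bm D\hat{\bm\gamma}$ is feasible for Program~\eqref{eq:als_dual} and that its dual objective equals the primal optimum of Program~\eqref{eq:als_primal}. By weak duality $\hat{\bm z}$ is then optimal. Since the dual objective is strongly concave, as noted before the statement, it is also \emph{the} optimal solution. We assume $\lambda>0$ so that the term $\frac{1}{2\lambda}\Vert\bm D^{\herm}\bm z\Vert_2^2$ is well defined.

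Feasibility is immediate from Equation~\eqref{eq:lemma2_1} of Proposition~\ref{prop:primal_property}, which gives $\Vert\hat{\bm z}\Vert_{\mathcal A}^*\le\tau$.

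For the value, we evaluate the dual objective at $\hat{\bm z}$. Writing $\widetilde{\bm y}-\hat{\bm z}=\hat{\bm h}_s+\bm D\hat{\bm\gamma}$, we have
\begin{equation*}
\tfrac12\Vert\widetilde{\bm y}\Vert_2^2-\tfrac12\Vert\widetilde{\bm y}-\hat{\bm z}\Vert_2^2=\text{Re}\langle\hat{\bm z},\hat{\bm h}_s+\bm D\hat{\bm\gamma}\rangle+\tfrac12\Vert\hat{\bm z}\Vert_2^2,
\end{equation*}
which follows by expanding $\Vert\hat{\bm z}+(\hat{\bm h}_s+\bm D\hat{\bm\gamma})\Vert_2^2$. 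Next we apply the remaining optimality conditions. Equation~\eqref{eq:lemma2_3} gives $\text{Re}\langle\hat{\bm z},\hat{\bm h}_s\rangle=\tau\Vert\hat{\bm h}_s\Vert_{\mathcal A}$. Equation~\eqref{eq:lemma2_2} gives $\bm D^{\herm}\hat{\bm z}=\lambda\hat{\bm\gamma}$, so $\text{Re}\langle\hat{\bm z},\bm D\hat{\bm\gamma}\rangle=\lambda\Vert\hat{\bm\gamma}\Vert_2^2$ and $\frac{1}{2\lambda}\Vert\bm D^{\herm}\hat{\bm z}\Vert_2^2=\frac{\lambda}{2}\Vert\hat{\bm\gamma}\Vert_2^2$. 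Collecting terms, the dual objective becomes
\begin{equation*}
\tfrac12\Vert\hat{\bm z}\Vert_2^2+\tau\Vert\hat{\bm h}_s\Vert_{\mathcal A}+\lambda\Vert\hat{\bm\gamma}\Vert_2^2-\tfrac{\lambda}{2}\Vert\hat{\bm\gamma}\Vert_2^2,
\end{equation*}
which is exactly the primal objective of Program~\eqref{eq:als_primal} evaluated at $(\hat{\bm h}_s,\hat{\bm\gamma})$.

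To close the argument we need weak duality, which is implicit in the Lagrangian derivation of Proposition~\ref{prop:dual_prolem}: the dual objective at any feasible $\bm z$ is at most the primal objective at any primal pair. Under the conjugation convention used there, one could confirm this directly via $\text{Re}\langle\bm z,\bm h_s\rangle\le\tau\Vert\bm h_s\Vert_{\mathcal A}$ together with completing squares in $\bm\gamma$. Hence $\hat{\bm z}$ attains the dual supremum, and uniqueness follows from strong concavity.

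The step most likely to need care is the sign and conjugation bookkeeping in the inner products. Specifically, we must check that $\text{Re}\langle\hat{\bm z},\bm D\hat{\bm\gamma}\rangle=\text{Re}\langle\bm D^{\herm}\hat{\bm z},\hat{\bm\gamma}\rangle$ under the paper's inner product convention. This identity is what makes the $\lambda$-terms collapse to $\frac{\lambda}{2}\Vert\hat{\bm\gamma}\Vert_2^2$.
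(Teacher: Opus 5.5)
Your proposal is correct and follows the paper's proof essentially step for step. Like the paper, you evaluate the dual objective at $\hat{\bm z}$, expand $\Vert\hat{\bm z}+\hat{\bm h}_s+\bm D\hat{\bm\gamma}\Vert_2^2$, and use Equations~\eqref{eq:lemma2_2} and~\eqref{eq:lemma2_3} to reduce it to the primal objective at $(\hat{\bm h}_s,\hat{\bm\gamma})$. Your explicit feasibility check via Equation~\eqref{eq:lemma2_1} and your weak-duality step are what the paper's closing sentence leaves implicit, and the conjugation issue you flag is harmless, since $\text{Re}\langle\hat{\bm z},\bm D\hat{\bm\gamma}\rangle=\text{Re}\langle\bm D^{\herm}\hat{\bm z},\hat{\bm\gamma}\rangle$ holds under either inner-product convention.
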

\begin{proof}
    See Appendix~\ref{appendix:cor:dual_primal_residue}.
\end{proof}

According to Equation \eqref{eq:lemma2_2}, the diffuse component estimator is equivalent to a two-stage procedure, where $\hat{\bm{h}}_s$ is first estimated and $\hat{\bm{\gamma}}$ is the optimal solution to the following ridge regression problem:
\begin{equation}
    \min\limits_{\bm{\gamma} \in \mathbb{C}^L} \left\Vert\widetilde{\bm{y}}-\hat{\bm{h}}_s-\bm{D}\bm{\gamma}\right\Vert_2^2+\lambda\left\Vert\bm{\gamma}\right\Vert_2^2.
\end{equation}

From Corollary~\ref{cor:dual_primal_residue}, we know that the sparse component estimate from optimizing Program \eqref{eq:als_primal} is biased due to the existence of the dual variable optimizing Program \eqref{eq:als_dual}. As expected, this noisy estimate affects the estimation of the diffuse component.  This dependence is clearly seen in the ridge regression formulation above.

Given Corollary~\ref{cor:dual_primal_residue}, the estimation error from optimizing Program \eqref{eq:als_primal} can be equivalently expressed as $\mbox{SE}_b = \left\Vert\hat{\bm{z}}-\widetilde{\bm{n}}\right\Vert_2^2$. According to the dual problem derived in Proposition~\ref{prop:dual_prolem}, it follows that the magnitude of the dual solution satisfies $\left\Vert\hat{\bm{z}}\right\Vert \propto \tau$. Therefore, the solution is more biased, as $\tau$ increases, which also affects the accuracy of $\hat{\bm h}_d$. However, according to our experiments, the impact of debiasing (performance improvement in sparse component estimation) also increases as $\tau$ increases. Thus, there is a trade-off between bias and overall performance.

\vspace*{-0.15in}
\subsection{Hyper-parameters selection: $\tau$,  $\lambda$, and $E_d$}
To obtain the best denoising rates for the estimator in Program \eqref{eq:als_primal}, the hyperparameters $\tau$ and $\lambda$ must be chosen appropriately. To that end, one must have the dual solution $\hat{\bm z}$ to Program~\eqref{eq:als_dual} closely match the noise $\widetilde{\bm{n}}$.
From Corollary~\ref{cor:dual_primal_residue} and Program~\eqref{eq:als_dual}, we see that we need to choose $\tau$ such that $\left\Vert \widetilde{\bm{n}}\right\Vert^*_{\mathcal{A}} \leq \tau$ holds with high probability.
It is known that setting $\tau \propto \sigma\sqrt{N\log{N}}$ achieves near-minmax denoising rate for the classical atomic norm denoiser~\cite{chi2020harnessing,bhaskar2013atomic}. We retain this selection criterion for the HALS estimator because the diffuse component of the channel has lower energy than the sparse one. {Ignoring the sparse components, the regularizer $\lambda$ can be approximated, based on the linear minimum mean square error estimator under white Gaussian noise, as $\lambda \simeq \frac{N \sigma^2}{\mathbb{E}\left[\left\Vert \bm \gamma^\star \right\Vert_2^2\right]}$ \cite{kay1993fundamentals}.}

Using Equations \eqref{eq:lemma2_1} and \eqref{eq:lemma2_2} from Proposition~\ref{prop:primal_property}, together with Corollary~\ref{cor:dual_primal_residue}, we can derive an upper bound for the estimate of the energy of the diffuse components as
    \begin{align}
        \left\Vert\hat{\bm{\gamma}}\right\Vert_2^2&=\left<\lambda\hat{\bm{\gamma}}, \frac{1}{\lambda}\hat{\bm{\gamma}}\right>= \left<\bm{D}^{\herm}\hat{\bm{z}}, \frac{1}{\lambda}\hat{\bm{\gamma}}\right>=\frac{1}{\lambda}\left<\hat{\bm{z}}, \bm{D}\hat{\bm{\gamma}}\right>\nonumber
        \\
        &\leq \frac{\tau}{\lambda}\left\Vert\bm{D}\hat{\bm{\gamma}}\right\Vert_\mathcal{A} \leq \frac{\tau}{\lambda}\left\Vert\hat{\bm{\gamma}}\right\Vert_1 \leq \frac{\sqrt{L}\tau}{\lambda}\left\Vert\hat{\bm{\gamma}}\right\Vert_2, \nonumber
    \end{align}
which indicates $\left\Vert\hat{\bm{\gamma}}\right\Vert_2 \leq \frac{\sqrt{L}\tau}{\lambda}$.

We remark that while we adjust $\tau$ to control the sparsity of the sparse estimate $\hat{\bm{h}}_s$, the magnitude of $\hat{\bm{\gamma}}$ is also influenced, and we need to adjust $\lambda$ accordingly. Therefore, while the previous analysis provides an appropriate estimate of the scale of the two hyperparameters, the previous selection rules for $\tau$ and $\lambda$ are, in general, not min-max optimal and may require fine-tuning to achieve a better denoising rate.

To limit the need for fine-tuning, we take advantage of Program~\eqref{eq:als_primal_sdp}, which does not need to select $\lambda$. Instead, it requires an upper bound $E_d$ on the ground truth diffuse coefficients energy $\left\Vert\bm{\gamma^\star}\right\Vert_2^2$. In practice, $\bm{\gamma}^\star$ can be assumed to be drawn from a probability distribution. Hence, $E_d$ can be selected so that the event $\left\Vert\bm{\gamma^\star}\right\Vert_2^2 \leq E_d$ holds with high probability. Fixing $E_d$, we can initialize $\tau \simeq\sigma\sqrt{N\log{N}}$ and adjust its values so that the sparsity of $\hat{\bm{h}}_s$ approaches the number of resolvable paths.

\section{Analysis of the Cramér-Rao Bounds}\label{sec:CRB}

We wish to understand the fundamental limits of channel estimation given the channel description in Section~\ref{sec:formulation}. In this section, we present a tight characterization of the Cramér-Rao lower bound (CRB) for estimating both channel parameters and channel responses.

\subsection{Cramér-Rao Bounds on the Channel Parameters}
We start in this section by deriving the Fisher information matrix (FIM) for the ground truth aHSD channel parameters $\{\bm{f}^\star, \bm{\alpha}^\star, \bm{\gamma}^\star\}$ given the observation model of Equation~\eqref{eq:channel_model}. Since the amplitudes $\bm{\alpha}^\star$ and $\bm{\gamma}^\star$ of the sparse and diffuse components, respectively, are complex, one must extend the parameter space to a complexified coordinate system $\{\bm{f}^\star, \bm{\alpha}^\star, \bm{\gamma}^\star, \overline{\bm{\alpha}^\star}, \overline{\bm{\beta}^\star} \}$ in order to derive the FIM on the parameters~\cite{ollila2008cramer,smith2005StatisticalResolution}. {For notational  clarity, we remove the $^\star$ herein and the channel has parameters $\bm{f}=\bm{f}^\star$, $\bm{\alpha}=\bm{\alpha}^\star$, and $\bm{\gamma}=\bm{\gamma}^\star$.}

\begin{lemma}\label{lemma:CRB_para}
    Assume $G$ independent observations according to Equation \eqref{eq:channel_model} and $\vb{\bm{n}}^{(g)} \sim \mathcal{CN}(\bm{0}, \sigma^2\bm{I}_N)$. Denote the parameter vector as $\bm{\theta} =\left[\bm{f}^\top, \bm{\alpha}^\top, \bm{\gamma}^\top, \bm{\alpha}^\herm, \bm{\gamma}^\herm\right]^\top\in \mathbb{R}^m \times \mathbb{C}^{2m+2L}$. Then, the FIM $\bm{J}^{\bm{\theta}}$ for $\bm{\theta}$ is given by
    \begin{align}\label{eq:fim_theta}
        \bm{J}^{\bm{\theta}} = & \sigma^{-2} \bm{U}^\herm \begin{bmatrix}
            \sum\limits_{g=1}^G\bm{S}_g^\herm \bm{S}_g &\bm{0}_{N \times N}\\
            \bm{0}_{N \times N} & \sum\limits_{g=1}^G\overline{\left(\bm{S}_g^{\herm} \bm{S}_g\right)}
        \end{bmatrix} \bm{U},\\
 \mbox{where}     \;\;\;  
            \bm{U} &= \begin{bmatrix}
                \bm{A}'_{\bm{f}} & \left[\bm{A}_{\bm{f}}, \bm{D}\right]&\bm{0}_{N \times (m+L)} \\
                \overline{\bm{A}'_{\bm{f}}} & \bm{0}_{N \times (m+L)} & \overline{\left[\bm{A}_{\bm{f}}, \bm{D}\right] }
            \end{bmatrix} \in \mathbb{C}^{2N \times (3m +2L)},\nonumber\\
            \bm{A}'_{\bm{f}}&=\bm{\Lambda}\bm{A}_{\bm{f}}\diag(\bm{\alpha})\in\mathbb{C}^{N \times m},\nonumber\\
            \bm{\Lambda} &= j\pi\diag\left(-(N-1), -(N-3), \cdots,N-1\right).
        \end{align}
\end{lemma}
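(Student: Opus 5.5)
The plan is to use the standard Slepian--Bangs formula for a complex Gaussian observation with parameter-dependent mean and parameter-independent covariance, written in the complexified (Wirtinger) coordinates of~\cite{ollila2008cramer}. The observations are taken in the form $\bm{y}^{(g)} = \bm{S}_g\bm{\mu}(\bm{\theta}) + \bm{n}^{(g)}$, i.e. Equation~\eqref{eq:ofdm_relation} before premultiplying by $\bm{S}_g^{-1}$, with $\bm{\mu}(\bm{\theta}) = \bm{A}_{\bm{f}}\bm{\alpha}+\bm{D}\bm{\gamma}$. This form produces the weights $\bm{S}_g^\herm\bm{S}_g$ in~\eqref{eq:fim_theta}; it is equivalent to Equation~\eqref{eq:channel_model} with colored noise covariance $\sigma^2\bm{S}_g^{-1}\bm{S}_g^{-\herm}$. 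Since the noise is circular with covariance $\sigma^2\bm{I}_N$ independent of $\bm{\theta}$ and the $G$ snapshots are independent, the log-likelihood is, up to constants,
\[
-\sigma^{-2}\sum_{g=1}^G\left\Vert\bm{y}^{(g)}-\bm{S}_g\bm{\mu}(\bm{\theta})\right\Vert_2^2 .
\]
The FIM is the sum over $g$ of the single-snapshot FIMs.

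First, I would stack the observation with its conjugate, $\underline{\bm{y}}^{(g)} = [\bm{y}^{(g)\top}, \bm{y}^{(g)\herm}]^\top$, whose mean is $\diag(\bm{S}_g,\overline{\bm{S}_g})\,[\bm{\mu}^\top,\bm{\mu}^\herm]^\top$. Its augmented covariance is $\sigma^2\bm{I}_{2N}$ (block-diagonal, since the pseudo-covariance vanishes by circularity). Then the complex Slepian--Bangs formula gives
\[
\bm{J}^{\bm{\theta}} = \sum_{g}\left(\frac{\partial \underline{\bm{\mu}}_g}{\partial\bm{\theta}^\top}\right)^\herm \left(\sigma^2\bm{I}_{2N}\right)^{-1} \frac{\partial \underline{\bm{\mu}}_g}{\partial\bm{\theta}^\top},
\]
where the derivatives with respect to $\bm{\alpha}$ and $\overline{\bm{\alpha}}$ (and likewise $\bm{\gamma}$ and $\overline{\bm{\gamma}}$) are Wirtinger derivatives treating them as independent. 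The factor conventions, i.e. whether a $1/2$ appears, have to be checked against~\cite{ollila2008cramer} so that the result matches the stated normalization $\sigma^{-2}$.

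Second, I would compute the Jacobian $\partial[\bm{\mu}^\top,\bm{\mu}^\herm]^\top/\partial\bm{\theta}^\top$ and show that it equals $\bm{U}$.
\begin{itemize}
\item Frequencies: $\partial \bm{a}(f)/\partial f = \bm{\Lambda}\bm{a}(f)$, because the $n$-th entry $e^{j2\pi\frac{-(N-1)+2n}{2}f}$ differentiates to $j\pi(-(N-1)+2n)$ times itself. Hence $\partial\bm{\mu}/\partial\bm{f}^\top = \bm{\Lambda}\bm{A}_{\bm{f}}\diag(\bm{\alpha}) = \bm{A}'_{\bm{f}}$, and since $f$ is real, $\partial\overline{\bm{\mu}}/\partial\bm{f}^\top=\overline{\bm{A}'_{\bm{f}}}$.
\item Amplitudes: $\partial\bm{\mu}/\partial[\bm{\alpha}^\top,\bm{\gamma}^\top]=[\bm{A}_{\bm{f}},\bm{D}]$ and $\partial\bm{\mu}/\partial[\bm{\alpha}^\herm,\bm{\gamma}^\herm]=\bm{0}$ by holomorphy. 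The conjugate block is the mirror image, $\partial\overline{\bm{\mu}}/\partial[\bm{\alpha}^\herm,\bm{\gamma}^\herm]=\overline{[\bm{A}_{\bm{f}},\bm{D}]}$.
\end{itemize}
This reproduces exactly the block structure of $\bm{U}$.

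Third, I would factor the snapshot matrices out as
\[
\partial\underline{\bm{\mu}}_g/\partial\bm{\theta}^\top = \diag(\bm{S}_g,\overline{\bm{S}_g})\,\bm{U},
\]
so each term becomes $\bm{U}^\herm\diag(\bm{S}_g^\herm\bm{S}_g,\overline{\bm{S}_g^\herm\bm{S}_g})\bm{U}$. Summing over $g$ yields~\eqref{eq:fim_theta}.

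The main obstacle I expect is the mixed real/complex parametrization. The real parameter $\bm{f}$ has no conjugate copy, while the complex gains do. I need to justify that the augmented Slepian--Bangs formula remains valid with a single real block, and that the normalization is consistent across the real and complex entries. I would handle this by writing the real-coordinate FIM in terms of $(\bm{f},\mathrm{Re}\,\bm{\alpha},\mathrm{Im}\,\bm{\alpha},\dots)$, which is $2\sigma^{-2}\mathrm{Re}(\cdot)$ of a Gram matrix, and then applying the invertible linear change of variables to $(\bm{\alpha},\overline{\bm{\alpha}})$. This shows that the result coincides with the displayed Hermitian form, up to the convention used for complex CRBs.
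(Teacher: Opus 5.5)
Your proposal is correct and is essentially the paper's argument. The paper writes the FIM of the augmented channel $\bm{h}^\diamond$ as $\sigma^{-2}\diag(\sum_g \bm{S}_g^\herm\bm{S}_g, \sum_g\overline{\bm{S}_g^\herm\bm{S}_g})$, starting from $\bm{y}^{(g)}\sim\mathcal{CN}(\bm{S}_g\bm{h},\sigma^2\bm{I}_N)$ (the same reading of the observation model you adopt), and then pulls it back through $\partial\bm{h}^\diamond/\partial\bm{\theta}=\bm{U}$; this is your factorization $\diag(\bm{S}_g,\overline{\bm{S}_g})\,\bm{U}$ of the augmented-mean Jacobian, split into two chain-rule steps.

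The normalization you flag, which the paper leaves implicit, closes with no factor $1/2$. The real-coordinate Jacobian is $\diag(\bm{S}_g,\overline{\bm{S}_g})\,\bm{U}\bm{W}$, where $\bm{W}$ is the linear map from $(\bm{f},\mathrm{Re},\mathrm{Im})$ to $(\bm{f},\bm{\alpha},\bm{\gamma},\overline{\bm{\alpha}},\overline{\bm{\gamma}})$. Since $2\,\mathrm{Re}(\bm{M}^\herm\bm{M})=\bm{M}^\herm\bm{M}+\overline{\bm{M}}^\herm\overline{\bm{M}}$, the transformed FIM $\bm{W}^{-\herm}\bm{J}_r\bm{W}^{-1}$ is exactly the stated expression.
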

\begin{proof}
    Let $\bm{h}^\diamond \triangleq \left[\bm{h}^\top, \bm{h}^\herm\right]^\top$. Given $\vb{\bm{n}}^{(g)} \sim \mathcal{CN}(\bm{0}, \sigma^2\bm{I}_N)$, we have $\bm{{\vb{y}}}^{(g)} \sim\mathcal{CN}\left(\bm{S}_g\bm{h}, \sigma^2 \bm{I}_N\right)$.
    Under the independence assumption, the FIM on $\bm{h}^\diamond$ given 
    $\{\bm{y}^{(1)}, \cdots, \bm{y}^{(G)}\}$ is 
    \begin{eqnarray}
      \bm{J}^{\bm{h}^\diamond} = \sigma^{-2}\begin{bmatrix}
            \sum\limits_{g=1}^G\bm{S}_g^\herm \bm{S}_g &\bm{0}_{N \times N}\\
            \bm{0}_{N \times N} & \sum\limits_{g=1}^G\overline{\left(\bm{S}_g^{\herm} \bm{S}_g\right)}
        \end{bmatrix}.
        \label{eq:h_fim}
    \end{eqnarray}
    From the chain rule, one has $\bm{J}^{\bm{\theta}} =\left(\frac{\partial \bm{h}^\diamond}{\partial \bm{\theta}}\right)^\herm\bm{J}^{\bm{h}^\diamond}\left(\frac{\partial \bm{h}^\diamond}{\partial \bm{\theta}}\right) $; direct calculation yields,
    \begin{equation}\label{eq:dh_dtheta}
        \frac{\partial \bm{h}^\diamond}{\partial \bm{\theta}} = \begin{bmatrix}
                \bm{A}'_{\bm{f}} & \left[\bm{A}_{\bm{f}}, \bm{D}\right]&\bm{0}_{N \times (m+L)} \\
                \overline{\bm{A}'_{\bm{f}}} & \bm{0}_{N \times (m+L)} & \overline{\left[\bm{A}_{\bm{f}}, \bm{D}\right] }
            \end{bmatrix} \triangleq \bm{U}.
    \end{equation}
   to conclude on the desired statement.
\end{proof}

The Cramér-Rao lower bound states that the covariance $\bm{K}_{\bm{\theta}}$ of any unbiased estimator satisfies $\bm{K}_{\bm{\theta}} \succeq {(\bm{J}^{\bm{\theta}})}^{-1}$ for the semi-definite order. Therefore, the crux of the problem is to control the inverse of Equation \eqref{eq:fim_theta}.
Herein, we harness the Vandermonde structure of matrices $\bm{\Lambda} \bm{A}_{\bm{f}}$, $\bm{A}_{\bm{f}}$ and $\bm{D}$ in the expression of the FIM in Equation \eqref{eq:fim_theta}, analogously to that of super-resolution settings~\cite{Ferreira2022Stability}, to propose novel statistical bounds on the CRB for the aHSD channel parameters.

The problem stability is measured, as in Proposition~\ref{prop:exact_recovery}, as a function of the minimum separation between the supports of the sparse and diffuse components. 
For clarity, the pilot matrix is assumed unitary, i.e. $\bm{S}^\herm_g \bm{S}_g = \bm{I}_N$, $\forall g\in\llbracket1, G\rrbracket$. In practice, this can be achieved by transmitting phase-shift keying (PSK) data symbols with unit energy.

\begin{prop}\label{prop:CRB_bounds}
    Assume $G$ independent observations according to Equation \eqref{eq:channel_model}, $\vb{\bm{n}}^{(g)} \sim \mathcal{CN}(\bm{0}, \sigma^2\bm{I}_N)$ and $\bm{S}_g^\herm \bm{S}_g = \bm{I}_N$, for all $g \in \llbracket 1, G\rrbracket$. Let  $\delta = \min(\delta_{\mathrm{s}}, \delta_{\mathrm{s-d}})$. Furthermore, assume $\alpha_i \neq 0$ for all $i \in \llbracket 1, m\rrbracket$. 
    If $3m + 3L \leq 2N$ and $N \delta > 2$, then
    \begin{subequations}\label{eq:CRB_bounds_theo}
    \begin{align}
        K_{\min} {|\alpha_i|^{-2}} \sigma^2 &\leq \frac{2\pi^2N (N-1)(N+1)}{{3}} \mathrm{CRB}(f_i) \nonumber \\
        & \qquad \leq  K_{\max} {|\alpha_i|^{-2}} \sigma^2,\label{eq:CRB_f}\\
        K_{\min} \frac{\sigma^2}{N} &\leq \mathrm{CRB}(
        \alpha_i) \leq K_{\max} \frac{\sigma^2}{N},\label{eq:CRB_alpha} \\
       K_{\min} \frac{\sigma^2}{N} &\leq \mathrm{CRB}(
        \gamma_r) \leq  K_{\max} \frac{\sigma^2}{N},\label{eq:CRB_gamma}
    \end{align}
    \end{subequations}
for all $i \in \llbracket 1, m \rrbracket$ and all $r \in \llbracket 1, L \rrbracket$, where $K_{\min}$ and $K_{\max}$ are the constants depending on the separation parameter  $N\delta$
\begin{align*}
    K_{\min} &= \frac{1}{G\left(1 + 2{(N \delta)}^{-1} \right)},&
     K_{\max} &= \frac{1}{G\left(1 - 2{(N \delta)}^{-1} \right)}.
\end{align*}
\end{prop}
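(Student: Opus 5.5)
With $\bm{S}_g^\herm\bm{S}_g=\bm{I}_N$, Lemma~\ref{lemma:CRB_para} simplifies to $\bm{J}^{\bm\theta}=G\sigma^{-2}\bm{U}^\herm\bm{U}$. Write $\bm{B}=[\bm{A}'_{\bm f},\bm{A}_{\bm f},\bm{D}]\in\mathbb{C}^{N\times(2m+L)}$. Since $\bm{A}'_{\bm f}$ enters both row blocks of $\bm U$ (once conjugated), $\bm{U}^\herm\bm{U}$ is a real-part/complex-part mixture of $\bm{B}^\herm\bm{B}$. The plan is to rescale the columns so that every block is comparable to a unit-norm Vandermonde column, and then to bound the spectrum of the rescaled Gram matrix.

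\textbf{Normalization.} Factor $\bm{A}'_{\bm f}=\bm\Lambda\bm A_{\bm f}\diag(\bm\alpha)$ and set $\bm{W}=\diag(\bm{\alpha}\, c_N,\ \sqrt N\,\bm 1_{m+L})$, where $c_N^2=\|\bm\Lambda\bm 1\|_2^2=\pi^2N(N^2-1)/3$. This value is obtained by summing $(N-1-2k)^2$ over $k$, and it gives the factor $2\pi^2N(N-1)(N+1)/3$ appearing in \eqref{eq:CRB_f} once the factor $2$ from the real parameter $f_i$ seen through $\bm h$ and $\bar{\bm h}$ is included. The normalized matrix $\widetilde{\bm B}=\bm B\bm W^{-1}$ has unit-norm columns: $N^{-1/2}\bm a(f_i)$, $N^{-1/2}\bm a(-r/L)$, and $c_N^{-1}\bm\Lambda\bm a(f_i)$. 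This is exactly a confluent Vandermonde matrix of order one, with nodes $\{f_i\}\cup\{-r/L\}$, where derivative columns appear only at the sparse nodes.

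\textbf{Spectral bound (main step).} Invoke the Gershgorin/Ingham-type bounds of~\cite{Ferreira2022Stability,ferreira2023conditionNumber} on the Gram matrix $\widetilde{\bm B}^\herm\widetilde{\bm B}$. For $N\delta>2$ they should give
\[
1-2(N\delta)^{-1}\le\lambda_{\min}\le\lambda_{\max}\le1+2(N\delta)^{-1}.
\]
This step is the main obstacle and needs two checks.

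First, the diffuse nodes $-r/L$ are separated by only $1/L$, not by $\delta$. The hypothesis $3m+3L\le 2N$ gives $L<N$, so $N/L>1$. I also expect the diagonal-dominance argument to use the Dirichlet-kernel decay $|\langle\bm a(u),\bm a(v)\rangle|/N\lesssim 1/(N|u-v|)$ and its derivative analogues. The off-diagonal row sums can then be controlled by a harmonic-type sum over points separated by at least $\delta$ between sparse and diffuse nodes. Among the diffuse nodes themselves the grid is uniform, and when $L$ divides $N$ the cross terms even vanish exactly, since the columns of $\bm D$ are orthogonal. I would therefore treat the $\bm D$ block as nearly orthogonal and apply the confluent bound only to the sparse–sparse and sparse–diffuse interactions.

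Second, I must confirm that the constant really comes out as $2/(N\delta)$. If it does not, I would accept whatever the cited kernel bounds give.

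\textbf{Lifting to $\bm J^{\bm\theta}$ and inverting.} Apply the same column scaling to $\bm U$, giving the scaled matrix $\widetilde{\bm U}$. Check that $\widetilde{\bm U}^\herm\widetilde{\bm U}$ inherits the eigenvalue bounds: its blocks are $2\,\mathrm{Re}$ of the derivative block, together with two copies of $\widetilde{\bm B}^\herm\widetilde{\bm B}$ restricted appropriately. A quadratic form $\bm x^\herm\widetilde{\bm U}^\herm\widetilde{\bm U}\bm x$ can be rewritten as $\|\widetilde{\bm B}\bm u\|^2+\|\widetilde{\bm B}\bm v\|^2$ for suitable vectors $\bm u,\bm v$ built from $\bm x$, with $\|\bm u\|^2+\|\bm v\|^2$ proportional to $\|\bm x\|^2$. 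This transfers the bound.

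Finally, use that for any Hermitian $\bm M$ with spectrum in $[a,b]$, the diagonal entries of $\bm M^{-1}$ lie in $[1/b,1/a]$. Undoing the scaling gives
\begin{align*}
\mathrm{CRB}(\theta_k)&=\sigma^2 G^{-1}W_k^{-2}[(\widetilde{\bm U}^\herm\widetilde{\bm U})^{-1}]_{kk},\\
W_k^2&=|\alpha_i|^2c_N^2\ \text{(times 2) for } f_i,\qquad W_k^2=N\ \text{for } \alpha_i,\gamma_r .
\end{align*}
This yields \eqref{eq:CRB_f}–\eqref{eq:CRB_gamma} with $K_{\min},K_{\max}$ as stated. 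The assumption $\alpha_i\neq0$ ensures that $\bm W$ is invertible.
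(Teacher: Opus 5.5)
The gap is in your spectral step, and it starts from a misdiagnosis. You worry that the diffuse nodes are ``separated by only $1/L$, not by $\delta$,'' and you try to repair this with $N/L>1$ and a separate treatment of the $\bm D$ block. But every $f_i$ lies within $1/(2L)$ of some grid point, so $\delta\le\delta_{\mathrm{s-d}}\le 1/(2L)$. Hence the grid spacing $1/L$ is at least $2\delta$, and the full node set $\{f_i\}\cup\{-r/L\}$ is $\delta$-separated. Moreover, $N\delta>2$ already forces $N>4L$; the hypothesis $3m+3L\le 2N$ plays no role here.

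The workaround you sketch would not give the stated constants anyway. Exact orthogonality of the columns of $\bm D$ needs $L\mid N$, which is not assumed; it fails for the $N=101$, $L=20$ used in the experiments. Gershgorin-type row sums of the Dirichlet-kernel bound $1/(N|u-v|)$ over a $\delta$-separated set produce a term of order $\log(1/\delta)/(N\delta)$, not $2/(N\delta)$, and splitting into blocks only loses more. You also skip that $\widetilde{\bm B}$ has derivative columns only at the sparse nodes, whereas the paper applies the confluent bound to a matrix with a derivative column at every node.

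The fix is the paper's route:
\begin{enumerate}
\item Append $c_N^{-1}\bm\Lambda\bm D$ to get the normalized confluent Vandermonde matrix $[c_N^{-1}\bm\Lambda\bm A_{\bm f},\,c_N^{-1}\bm\Lambda\bm D,\,N^{-1/2}\bm A_{\bm f},\,N^{-1/2}\bm D]$ on the $\delta$-separated node set.
\item Use \cite[Theorem 5]{ferreira2023conditionNumber} to place its Gram spectrum in $[1-2(N\delta)^{-1},\,1+2(N\delta)^{-1}]$.
\item Note that $\widetilde{\bm B}^\herm\widetilde{\bm B}$ is a principal submatrix of that Gram matrix. By Cauchy interlacing, its extreme eigenvalues stay in the same interval.
\end{enumerate}

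With that in place, the rest of your argument goes through. Your lifting step is in fact cleaner than the paper's Lemma~\ref{lemma:singular_bound}. Put the $\sqrt{2}$ on the $f$-columns and write $\bm x=[\bm x_1;\bm x_2;\bm x_3]$. Then exactly
\[
\|\widetilde{\bm U}\bm x\|_2^2=\|\widetilde{\bm B}\bm u\|_2^2+\|\widetilde{\bm B}\bm v\|_2^2,
\]
where $\bm u=[\bm x_1/\sqrt2;\bm x_2]$, $\bm v=[\bm\Psi\overline{\bm x_1}/\sqrt2;\overline{\bm x_3}]$ and $\bm\Psi=\diag(\bm\alpha)\diag(\overline{\bm\alpha})^{-1}$, and $\|\bm u\|_2^2+\|\bm v\|_2^2=\|\bm x\|_2^2$. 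This gives both eigenvalue bounds at once, with no symmetrization of minimal singular vectors. It also absorbs the unit-modulus factors $\overline{\alpha_i}/\alpha_i$ that appear in the conjugate block when one divides by complex $\alpha_i$.
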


\begin{proof}
See Appendix~\ref{appendix:prop:CRB_bounds}.
\end{proof}

Proposition~\ref{prop:CRB_bounds} provides upper and lower bounds on CRB for each parameter of the aHSD channel. The proposed bounds indicate that the CRB on $f_i$ in Equation \eqref{eq:CRB_f} decays as $N^{-1}$ when the separation parameter $N\delta = \mathcal{O}(1)$ is fixed, which corroborates with the empirical Rayleigh limit, predicting a maximum resolution power of the order $\delta \simeq N^{-1}$~\cite{lindberg2012mathematical}. Similarly, the CRBs in Equations \eqref{eq:CRB_alpha} and~\eqref{eq:CRB_gamma} decay as $N^{-1}$ under a fixed noise level $\sigma$. It is interesting to note that the bounds in Equation \eqref{eq:CRB_bounds_theo} are independent of the model orders $m$ and $L$. Finally, our lower and upper bounds of the CRBs become tighter for a large separation parameter $N\delta$.
Lemma~\ref{lemma:CRB_para} and Proposition~\ref{prop:CRB_bounds} yield a significant improvement compared to~\cite{lyu2025hybrid, lyu2025cramer}, {where approximate CRBs were derived separately for the sparse and diffuse components by modeling the other component as colored Gaussian noise.}

\subsection{Cramér-Rao Bounds on the aHSD Channel}
{In the absence of models on the channel structure, a CRB for $\bm{h}^\diamond$ can be computed as the inverse of the FIM in Equation~\eqref{eq:h_fim}. However, the structure of the aHSD channel can be exploited to provide a more accurate characterization of the achievable estimation performance of $\bm{h}^{\diamond}$, as proposed below.}
\begin{prop}\label{prop:CRB_channel}
    Assume $G$ independent observations according to Equation \eqref{eq:channel_model}, $\vb{\bm{n}}^{(g)} \sim \mathcal{CN}(\bm{0}, \sigma^2\bm{I}_N)$ and $\bm{S}_g^\herm \bm{S}_g = \bm{I}_N$ for all $g \in \llbracket 1, G \rrbracket$. If $3m+2L \leq 2N$ then any unbiased estimator $\hat{\bm{h}}^\diamond$ of $\bm{h}^\diamond$ satisfies
    \begin{equation}
        \mathbb{E}\left[\left(\bm{h}^\diamond-\hat{\bm{h}}^\diamond\right)\left(\bm{h}^\diamond-\hat{\bm{h}}^\diamond\right)^{\herm}\right] \succeq \mathrm{CRB}(\bm{h}^\diamond) = \frac{\sigma^2}{G} \bm{P}_{\bm{U}},
        \label{eq:crb_channel}
    \end{equation}
    where $\bm{P}_{\bm{U}}=\bm{U}\bm{U}^\dagger$.
\end{prop}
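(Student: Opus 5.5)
The plan is to transfer the CRB from the parameter vector $\bm{\theta}$ of Lemma~\ref{lemma:CRB_para} to the channel $\bm{h}^\diamond$ through the differentiable map $\bm{\theta}\mapsto\bm{h}^\diamond(\bm{\theta})$. By Equation~\eqref{eq:dh_dtheta}, its Jacobian is exactly $\bm{U}$. Under the assumption $\bm{S}_g^\herm\bm{S}_g=\bm{I}_N$, the block matrix in Equation~\eqref{eq:h_fim} reduces to $G\bm{I}_{2N}$. Hence Lemma~\ref{lemma:CRB_para} gives $\bm{J}^{\bm{\theta}}=\frac{G}{\sigma^2}\bm{U}^\herm\bm{U}$. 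The standard CRB for a function $\bm{g}(\bm{\theta})$ (complexified version, \cite{ollila2008cramer}) states that any unbiased estimator of $\bm{g}(\bm{\theta})$ has covariance at least
\[
\frac{\partial \bm{g}}{\partial\bm{\theta}}\,(\bm{J}^{\bm{\theta}})^{-1}\left(\frac{\partial \bm{g}}{\partial\bm{\theta}}\right)^\herm .
\]
Applying this with $\bm{g}=\bm{h}^\diamond$ yields
\[
\frac{\sigma^2}{G}\,\bm{U}\left(\bm{U}^\herm\bm{U}\right)^{-1}\bm{U}^\herm=\frac{\sigma^2}{G}\bm{U}\bm{U}^\dagger=\frac{\sigma^2}{G}\bm{P}_{\bm{U}},
\]
provided $\bm{U}^\herm\bm{U}$ is invertible.

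The first step is to justify the FIM simplification and the chain-rule identity. This is routine from the proof of Lemma~\ref{lemma:CRB_para}.

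The second step is to handle invertibility. This is where the hypothesis $3m+2L\le 2N$ enters: it guarantees that $\bm{U}\in\mathbb{C}^{2N\times(3m+2L)}$ is tall, so full column rank is dimensionally possible. I would argue full column rank by noting that the columns of $[\bm{A}'_{\bm f},\bm{A}_{\bm f},\bm{D}]$ form a (confluent) Vandermonde system with distinct nodes $f_i$ and $-r/L$. When the $\alpha_i\neq0$ and the nodes are distinct, this system is injective. The conjugate-block structure then yields independence of the real-parameter columns $\overline{\bm{A}'_{\bm f}}$ stacked with $\bm{A}'_{\bm f}$.

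As a fallback that avoids any rank hypothesis, I would use the generalized (singular-FIM) CRB, $\frac{\partial\bm g}{\partial\bm\theta}(\bm J^{\bm\theta})^{\dagger}\left(\frac{\partial\bm g}{\partial\bm\theta}\right)^\herm$. It is valid because the range of $\left(\frac{\partial\bm g}{\partial\bm\theta}\right)^\herm=\bm{U}^\herm$ lies in the range of $\bm{U}^\herm\bm{U}$. This gives $\frac{\sigma^2}{G}\bm{U}(\bm{U}^\herm\bm{U})^\dagger\bm{U}^\herm=\frac{\sigma^2}{G}\bm{U}\bm{U}^\dagger$ by the pseudoinverse identity $\bm{U}^\dagger=(\bm{U}^\herm\bm{U})^\dagger\bm{U}^\herm$.

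The main obstacle is making the complexified CRB transformation rigorous. The parameter $\bm{\theta}$ mixes the real $\bm f$ with conjugate pairs, so the Jacobian must be read in the Wirtinger sense and "unbiased" must refer to the induced parametric family of $\bm h^\diamond$. I would state this carefully using \cite{ollila2008cramer}, after which the algebra above closes the proof.
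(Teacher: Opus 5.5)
Your main line is the paper's own proof. With $\bm S_g^\herm\bm S_g=\bm I_N$ one has $\bm J^{\bm\theta}=G\sigma^{-2}\bm U^\herm\bm U$, and the forward-transformation CRB with Jacobian $\bm U$ gives $\frac{\sigma^2}{G}\bm U(\bm U^\herm\bm U)^{-1}\bm U^\herm=\frac{\sigma^2}{G}\bm P_{\bm U}$.

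The one delicate point is invertibility, and your primary argument for it fails under the stated hypothesis. The matrix $[\bm A'_{\bm f},\bm A_{\bm f},\bm D]$ is $N\times(2m+L)$, so it can be injective only if $2m+L\le N$. The assumption $3m+2L\le 2N$ allows $2m+L$ to be as large as $N+m/2$. When $2m+L\le N$ your reasoning works: the first block row of $\bm U\bm c=\bm 0$ forces $\bm c_1=\bm c_2=\bm 0$, and the second then forces $\bm c_3=\bm 0$. Beyond that threshold, the first block row has a nontrivial kernel. Full column rank of $\bm U$ could then only come from the coupling of the two block rows through the shared $\bm f$-coefficients, i.e.\ from $\bm f$ being real. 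A distinct-nodes Vandermonde argument does not capture this. In addition, distinct nodes and $\alpha_i\neq 0$ are not hypotheses of this proposition.

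For comparison, the paper says the result follows from Proposition~\ref{prop:CRB_bounds}. It thus borrows invertibility from the eigenvalue bounds on $\bm Q^\herm\bm Q$, which require $\alpha_i\neq0$, $N\delta>2$ and $3m+3L\le 2N$. Those assumptions are also absent from the statement.

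Your singular-FIM fallback is correct and is the cleanest proof of the proposition as written. Since $\ker(\bm U^\herm\bm U)=\ker\bm U$, the range condition $\mathrm{range}(\bm U^\herm)\subseteq\mathrm{range}(\bm J^{\bm\theta})$ holds automatically. The identity $\bm U(\bm U^\herm\bm U)^\dagger\bm U^\herm=\bm U\bm U^\dagger$ then yields $\frac{\sigma^2}{G}\bm P_{\bm U}$ with no rank or separation assumption at all. I would make it the main argument.

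Full column rank of $\bm U$ (e.g.\ via Proposition~\ref{prop:CRB_bounds}) is then needed only for the remark after the proposition, $\frac12\operatorname{tr}\mathrm{CRB}(\bm h^\diamond)=\frac{\sigma^2}{G}(\frac32m+L)$. In general $\operatorname{tr}\bm P_{\bm U}=\operatorname{rank}\bm U$.
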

\begin{proof}
    The result follows immediately from Proposition~\ref{prop:CRB_bounds} and the CRB for the forward transformation of parameters~ \cite{moore2007constrained, kay1993fundamentals}, which says
    \(\mathrm{CRB}(\bm{h}^\diamond) =\left(\frac{\partial \bm{h}^\diamond}{\partial \bm{\theta}}\right)\left(\bm{J}^{\bm{\theta}}\right)^{-1}\left(\frac{\partial \bm{h}^\diamond}{\partial \bm{\theta}}\right)^\herm.
    \)
\end{proof}
By Proposition~\ref{prop:CRB_channel},
the estimation error of $\bm{h}$ is lower-bounded as $\mathbb{E}\left[\left\Vert\bm{h}-\hat{\bm{h}}\right\Vert_2^2\right] \geq \frac{1}{2}\mathop{trace}(\mathrm{CRB}(\bm{h}^\diamond)) = \frac{\sigma^2}{G}\left(\frac{3}{2}m+L\right)$, which is proportional to the number of unknown parameters.
\subsection{Cramér-Rao Bounds on the Channel Components}
Define the concatenated vector $\bm{h}_{jt} = [\bm{h}_s^\top, \bm{h}_d^\top]^\top$ and $\bm{h}^\diamond_{jt} = [\bm{h}_{jt}^\top, \bm{h}_{jt}^\herm]^\top$, and let us denote
\begin{equation}
    \bm{V}\triangleq\frac{\partial\bm{h}_{jt}^\diamond}{\partial \bm{\theta}} = \begin{bmatrix}
        \bm{A}'_{\bm{f}} & \bm{A}_{\bm{f}}&\bm{0}_{N \times L}&\bm{0}_{N \times m}&\bm{0}_{N \times L}\\
        \bm{0}_{N \times m} & \bm{0}_{N \times m}&\bm{D}&\bm{0}_{N \times m}&\bm{0}_{N \times L}\\
        \overline{\bm{A}'_{\bm{f}}}&\bm{0}_{N \times m}&\bm{0}_{N \times L}&\overline{\bm{A}_{\bm{f}}}&\bm{0}_{N \times L}
        \\
        \bm{0}_{N \times m}&\bm{0}_{N \times m}&\bm{0}_{N \times L}&\bm{0}_{N \times m}&\overline{\bm{D}}
    \end{bmatrix}.
\end{equation}
\begin{prop}\label{prop:CRB_sparse_diffuse_channel}
    Assume $G$ independent observations according to Equation \eqref{eq:channel_model}, $\vb{\bm{n}}^{(g)} \sim \mathcal{CN}(\bm{0}, \sigma^2\bm{I}_N)$ and $\bm{S}_g^\herm \bm{S}_g = \bm{I}_N$. If $3m+2L \leq 2N$ then any unbiased estimator $\hat{\bm{h}}_{jt}^\diamond$ of $\bm{h}_{jt}^\diamond$ satisfies
    \begin{equation}
        \mathbb{E}\left[\left(\bm{h}_{jt}^\diamond-\hat{\bm{h}}_{jt}^\diamond\right)\left(\bm{h}_{jt}^\diamond-\hat{\bm{h}}_{jt}^\diamond\right)^{\herm}\right] \succeq  \frac{\sigma^2}{G} \bm{V}\left(\bm{U}^{\herm}\bm{U}\right)^{-1}\bm{V}^{\herm}.
        \label{eq:crb_channel_s_d}
    \end{equation}
\end{prop}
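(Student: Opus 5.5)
The plan mirrors the proof of Proposition~\ref{prop:CRB_channel}: apply the CRB for a differentiable transformation of parameters~\cite{moore2007constrained,kay1993fundamentals} to the map $\bm{\theta}\mapsto \bm{h}_{jt}^\diamond(\bm{\theta})$. This map takes $\bm{\theta}$ to the stacked vector $[\bm{A}_{\bm f}\bm\alpha;\ \bm D\bm\gamma;\ \overline{\bm{A}_{\bm f}\bm\alpha};\ \overline{\bm D\bm\gamma}]$, and $\bm{h}_{jt}^\diamond$ is the quantity being estimated. For any unbiased estimator $\hat{\bm h}_{jt}^\diamond$ the transformation CRB reads
\begin{equation*}
\mathbb{E}\left[\left(\bm{h}_{jt}^\diamond-\hat{\bm{h}}_{jt}^\diamond\right)\left(\bm{h}_{jt}^\diamond-\hat{\bm{h}}_{jt}^\diamond\right)^{\herm}\right]\succeq \frac{\partial \bm{h}_{jt}^\diamond}{\partial\bm\theta}\left(\bm J^{\bm\theta}\right)^{-1}\left(\frac{\partial \bm{h}_{jt}^\diamond}{\partial\bm\theta}\right)^{\herm}.
\end{equation*}
The data likelihood is unchanged, so the FIM is still the one given in Lemma~\ref{lemma:CRB_para}.

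The first step is to compute the Jacobian in the complexified coordinates of $\bm\theta$, treating $\bm\alpha,\overline{\bm\alpha},\bm\gamma,\overline{\bm\gamma}$ as independent (Wirtinger) variables. For $\bm h_s=\sum_i\alpha_i\bm a(f_i)$ one gets $\partial\bm h_s/\partial f_i=\alpha_i\bm\Lambda\bm a(f_i)$, which gives the column block $\bm A'_{\bm f}$, and $\partial\bm h_s/\partial\bm\alpha=\bm A_{\bm f}$. For the diffuse part, $\partial \bm h_d/\partial\bm\gamma=\bm D$. The conjugate blocks follow in the same way. Stacking these gives exactly the matrix $\bm V$ defined before the statement.

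The second step is to simplify the FIM. Under $\bm S_g^\herm\bm S_g=\bm I_N$, the central block matrix in Equation~\eqref{eq:fim_theta} equals $G\bm I_{2N}$, so $\bm J^{\bm\theta}=\frac{G}{\sigma^2}\bm U^\herm\bm U$ and $(\bm J^{\bm\theta})^{-1}=\frac{\sigma^2}{G}(\bm U^\herm\bm U)^{-1}$. Substituting into the transformation CRB then gives \eqref{eq:crb_channel_s_d} directly.

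The main obstacle is justifying that $\bm U^\herm\bm U$ is invertible, i.e.\ that $\bm U$ has full column rank $3m+2L$. The hypothesis $3m+2L\le 2N$ is the necessary dimension count, and I will invoke the same argument used in the proof of Propositions~\ref{prop:CRB_bounds} and~\ref{prop:CRB_channel}. That argument writes $\bm U$ through the confluent Vandermonde matrix $[\bm\Lambda\bm A_{\bm f},\bm A_{\bm f},\bm D]$ and its conjugate. The key facts are that the nodes $f_i$ and $-r/L$ are distinct and that $\alpha_i\neq0$, so that $\diag(\bm\alpha)$ is invertible; under these conditions the confluent Vandermonde structure has full rank. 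If a direct rank argument is awkward, a fallback is the following: a rank-deficient $\bm U$ would only require replacing the inverse by a pseudoinverse together with the observation that $\bm V$'s row space lies in that of $\bm U$. This holds because both Jacobians share the same parameter columns ($\bm h^\diamond$ is a fixed linear image of $\bm h_{jt}^\diamond$), and it keeps the bound well defined.
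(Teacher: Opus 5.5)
Your main line matches the paper's intended proof. The paper gives no separate argument for this proposition; it is the one-line proof of Proposition~\ref{prop:CRB_channel} with $\bm V=\partial\bm h^\diamond_{jt}/\partial\bm\theta$ as the outer Jacobian and $\bm J^{\bm\theta}=G\sigma^{-2}\bm U^\herm\bm U$. Your Jacobian and FIM computations are correct. The gap is in the invertibility of $\bm U^\herm\bm U$, which your justification does not establish.

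\textbf{Full rank of $\bm U$.} Distinct nodes and $\alpha_i\neq 0$ are not hypotheses of the statement. Even granting them, the confluent Vandermonde argument does not go through under $3m+2L\le 2N$ alone. The first block row of $\bm U$ is built from $[\bm\Lambda\bm A_{\bm f},\bm A_{\bm f},\bm D]\in\mathbb{C}^{N\times(2m+L)}$. That matrix can be injective only if $N\ge 2m+L$, whereas $3m+2L\le 2N$ allows $\tfrac{3}{2}m+L\le N<2m+L$.

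In that range there is $\bm c\neq\bm 0$ with $\bm\Lambda\bm A_{\bm f}\bm c=[\bm A_{\bm f},\bm D]\bm u$ for some $\bm u$. Choose $\bm\alpha=\bm c$, which generically has no zero entry. Then $[\bm 1^\top,-\bm u^\top,-\bm u^\herm]^\top$ lies in $\ker\bm U$, so the FIM is singular despite distinct nodes and nonzero amplitudes.

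What actually makes the argument work is $\alpha_i\neq 0$ together with $N\delta>2$, which the paper silently inherits from Proposition~\ref{prop:CRB_bounds}. Under $N\delta>2$ all $m+L$ nodes are $\delta$-separated on the circle, so $(m+L)\delta\le 1$ forces $N>2(m+L)\ge 2m+L$. The confluent Vandermonde block is then tall with distinct nodes, and $\bm U=\bm Q\bm T$ has full column rank. Proposition~\ref{prop:CRB_bounds} even bounds $\lambda_{\min}(\bm Q^\herm\bm Q)$ away from zero. You should state these as explicit added hypotheses; the printed statement omits them too.

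\textbf{The pseudoinverse fallback.} The condition you state is the right one: the row space of $\bm V$ inside that of $\bm U$, i.e.\ $\ker\bm U\subseteq\ker\bm V$. But your reason gives the opposite inclusion. Writing $\bm h^\diamond=\bm M\bm h^\diamond_{jt}$, with $\bm M$ the fixed matrix summing the sparse and diffuse blocks, gives $\bm U=\bm M\bm V$. That yields only $\ker\bm V\subseteq\ker\bm U$.

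The needed inclusion fails precisely in the degenerate cases the fallback is meant to cover. If $f_i\equiv -r/L \pmod{1}$, the columns of $\bm U$ for $\alpha_i$ and $\gamma_{r+1}$ coincide, so their difference $\bm t$ lies in $\ker\bm U$. Yet $\bm V\bm t=[\bm a(f_i)^\top,-\bm a(f_i)^\top,\bm 0^\top,\bm 0^\top]^\top\neq\bm 0$.

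There, shifting $\alpha_i$ and $\gamma_{r+1}$ in opposite directions leaves the likelihood unchanged. So no unbiased estimator of $\bm h^\diamond_{jt}$ exists, and $\bm V(\bm U^\herm\bm U)^\dagger\bm V^\herm$ has no CRB meaning. Drop the fallback and exclude such configurations through the separation hypothesis instead.
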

It is shown in Section~\ref{sec:numerical} that the bounds of Propositions~\ref{prop:CRB_channel} and~\ref{prop:CRB_sparse_diffuse_channel} closely approximate the performance of the HALS estimator, especially in the high-SNR regime. 
\section{Numerical Results}

\label{sec:numerical}
\subsection{Experimental Setup}
Our experimental setup is as follows. WLOG, assume a single snapshot, $G=1$. Rayleigh fading~\cite{michelusi2012uwb} is assumed for the path gains, so that $\alpha_i^\star \sim \mathcal{CN}\left(0, e^{-\omega\frac{\tau_i}{\Delta T}}\right)$ for $i = \llbracket 1, m \rrbracket$, and $\gamma_r^\star \sim \mathcal{CN}\left(0, \beta e^{-\omega r}\right)$ for $r = \llbracket 0, L-1 \rrbracket$.
The noise is drawn from $\bm{n} \sim \mathcal{CN}\left(\bm{0}, \sigma^2 \bm{I}_N\right)$.  {Our data symbols are drawn uniformly from the QPSK constellation.}
We assume that $\bm{D}$ is known. $E_d$ is chosen to take the value of the expected amplitude of $\bm{\gamma}^\star$.
{For each channel realization,} we define the signal-to-noise ratio $\mbox{SNR(dB)} = 10\log_{10}\frac{\left\Vert\bm{h}^\star\right\Vert_2^2}{N\sigma^2}$. 
For our HALS algorithm, $\hat{\bm{h}}_{\mathrm{HALS}} = \hat{\bm{h}}_{s, \mathrm{db}}+\bm{D}\hat{\bm{\gamma}}$.  Our method is compared with the solution $\hat{\bm{h}}_{ANM}=\hat{\bm{h}}_{s, ANM}$ for vanilla ANM~\cite{bhaskar2013atomic}.  Here, the hyperparameters for HALS and vanilla ANM are fine-tuned to give an estimate with the correct number of sparse components and to yield the best MSE performance as in \cite{tang_near_2015}. Furthermore, we construct a  genie-aided scheme which, in addition to knowledge of the support of the diffuse components, $\bm{D}$, also knows the support of the sparse components, $\bm{A}_{\bm{f}^\star}$, and the expected energy of the diffuse components, $E_d$:
\begin{align}
    \min\limits_{\bm{\alpha}_{ge}, \bm{\gamma}_{ge}} & \left\Vert\widetilde{\bm{y}}-\bm{A}_{\bm{f}^\star}\bm{\alpha}_{ge}-\bm{D}\bm{\gamma}_{ge}\right\Vert_2^2+\mu\left\Vert\bm{\gamma}_{ge}\right\Vert_2^2.\tag{P4}
\end{align}

The genie estimator has a closed-form solution for amplitude estimation. Given $\hat{\bm{h}}_{s, ge} = \bm{A}_{\bm{f}^\star}\hat{\bm{\alpha}}_{ge}$ and $\hat{\bm{h}}_{d, ge} = \bm{D}\hat{\bm{\gamma}}_{ge}$,
\begin{equation}
  \begin{bmatrix}
         \hat{\bm{h}}_{s, ge}  \\
          \hat{\bm{h}}_{d, ge}\end{bmatrix}= \begin{bmatrix}
         \bm{P}_{\bm{A}_{\bm{f}^\star}}\left(\bm{I}-\bm{D}\bm{T}^{-1}\bm{D}^\herm\bm{P}_{\bm{A}_{\bm{f}^\star}^{\perp}}\right)  \\
         \bm{D}\bm{T}^{-1}\bm{D}^\herm\bm{P}_{\bm{A}_{\bm{f}^\star}^{\perp}}\end{bmatrix}\widetilde{\bm{y}},
    \label{eq:genie_est}
\end{equation}
where $\bm{P}_{\bm{A}_{\bm{f}^\star}} = \bm{A}_{\bm{f}^\star}\bm{A}_{\bm{f}^\star}^{\dagger}$, $\bm{P}_{\bm{A}_{\bm{f}^\star}^{\perp}} = \bm{I}-\bm{P}_{\bm{A}_{\bm{f}^\star}}$, $\bm{T}\triangleq \bm{D}^\herm\bm{P}_{\bm{A}_{\bm{f}^\star}^{\perp}}\bm{D}+\mu\bm{I}$. The genie method provides an estimate $\hat{\bm{h}}_{\mathrm{ge}} = \hat{\bm{h}}_{s,\mathrm{ge}}+\hat{\bm{h}}_{d,\mathrm{ge}}$, which serves as a benchmark for performance when the frequencies can be accurately estimated. The least-squares estimator of the channel is given by $\hat{\bm{h}}_{\mathrm{LS}}=\widetilde{\bm{y}}$. The convex problems are solved with \texttt{CVX} \cite{cvx,gb08}. 

The simulation results are {\bf averaged} over at least 50 synthetic channel realizations and 100 noise realizations per channel.
The performance metric on \emph{overall channel} estimation is the normalized mean square error (NMSE): $\mbox{NMSE}(\bm{h}^\star)=\mathbb{E} \left\{\frac{\left\Vert\bm{h}^\star-\hat{\bm{h}}\right\Vert_2^2}{\left\Vert\bm{h}^\star\right\Vert_2^2} \right\}$, {where the expectation is over the noise and the channel realizations.} Similarly, for the estimates on the sparse and diffuse parts of the aHSD channel, we also adapt normalized metrics, $ \mbox{NMSE}(\bm{h}_s^\star)$ and $ \mbox{NMSE}(\bm{h}_d^\star)$. For consistency with the NMSE, the corresponding CRBs are also normalized.

 The metric on \emph{diffuse amplitude} estimates is the mean-squared error (MSE): $\text{MSE}\left(\bm{\gamma}^\star\right) = \frac{1}{L}\mathbb{E}\{\left\Vert\bm{\gamma}^\star-\hat{\bm{\gamma}}\right\Vert_2^2\}$, with expectation as noted previously. For the sparse components, denote the set of frequency estimates as $\hat{\mathcal{T}}$.
 Our experiments revealed that 
 the estimation errors of the frequency components are dominated by outliers arising from very weak amplitudes of resolvable paths. As is often done in non-sparse estimation, we reject outliers~\cite{henninger2022probabilistic, muthineni2024outlier}. Following our model assumption that diffuse components have far lower energy, we define the thresholded frequency estimates subset $\hat{\mathcal{T}}_{th} = \left\{\hat{f}_i\in \hat{\mathcal{T}}: \left\vert \hat{\alpha_i}\right\vert\geq \max\limits_{0 \leq d \leq L-1}\left\{\left\vert\hat{\gamma}_d\right\vert\right\}\right\}$. 
 We match the estimates with the ground truth as
\begin{equation}
\bm{\pi}^\star=\argmin\limits_{\bm{\pi}}\left\{\sum\limits_{i = 1}^{\left\vert\hat{\mathcal{T}}\right\vert}\mathbbm{1}\left(\hat{f}_i \in\hat{\mathcal{T}}_{th}\right)\left\vert \hat{f}_i - f^\star_{\pi_i}\right\vert^2\right\},
\end{equation}
where $\bm{\pi}^\star=\left[\pi_1^{\star}, \cdots, \pi_{\left\vert \hat{\mathcal{T}}\right\vert}^{\star}\right]^\top$ is the optimal permutation of $t \in \llbracket 1, \left\vert \hat{\mathcal{T}}\right\vert \rrbracket$ to match the truth and estimate.

The threshold metrics for the frequencies and sparse amplitudes estimates are defined as follows
\begin{subequations}
    \begin{align}
    \text{MSE}_{th}\left(\bm{\vartheta}^\star\right)& = \left\vert\hat{\mathcal{T}}\right\vert^{-1}\sum\limits_{i = 1}^{\left\vert\hat{\mathcal{T}}\right\vert}\mathbbm{1}\left(\hat{f}_i \in\hat{\mathcal{T}}_{th}\right)\left\vert \hat{\vartheta}_i - \vartheta^\star_{\pi_i^{\star}}\right\vert^2;
     \\
    \text{CRB}_{th}\left(\bm{\vartheta}^\star\right)& = \left\vert\hat{\mathcal{T}}\right\vert^{-1}\sum\limits_{i = 1}^{\left\vert\hat{\mathcal{T}}\right\vert}\mathbbm{1}\left(\hat{f}_i \in\hat{\mathcal{T}}_{th}\right)\text{CRB}\left(\vartheta^\star_{\pi_i^{\star}}\right),
\end{align}
\end{subequations}
where $\bm{\vartheta}^\star = \bm{\alpha}^\star$ or $\bm{f}^\star$.

The threshold metrics are {\bf averaged} over all Monte Carlo simulations, weighted by the corresponding size of the valid frequency estimates, $\left\vert\hat{\mathcal{T}}_{th}\right\vert$. With thresholding, we eliminate frequency parameters that may be significantly shifted or attenuated by noise or diffuse components. {About $96\%$ of frequency estimates are retained when $\beta=0.01$, while this percentage drops to approximately $86\%$ when $\beta = 0.04$.}
\subsection{Impact of the SNR}
\begin{figure}[t!]
  \centerline{\includegraphics[width=8.5cm]{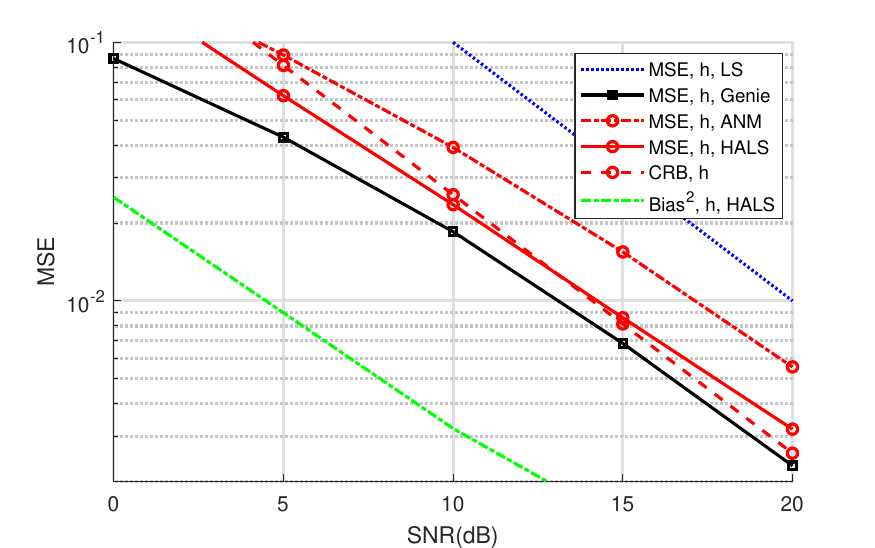}}
  \vspace*{-0.1in}
\caption{Normalized MSE/CRB on the channel coefficients against SNR. Parameter values are: $N=101$, $L=20$, $\omega=0.05$, $\beta = 0.01$, $m =4$ and $\delta = 1/(2L)$.}
\label{fig:exp1}
\vspace{-10pt}
\end{figure}
\begin{figure}[t!]
  \centerline{\includegraphics[width=8.5cm]{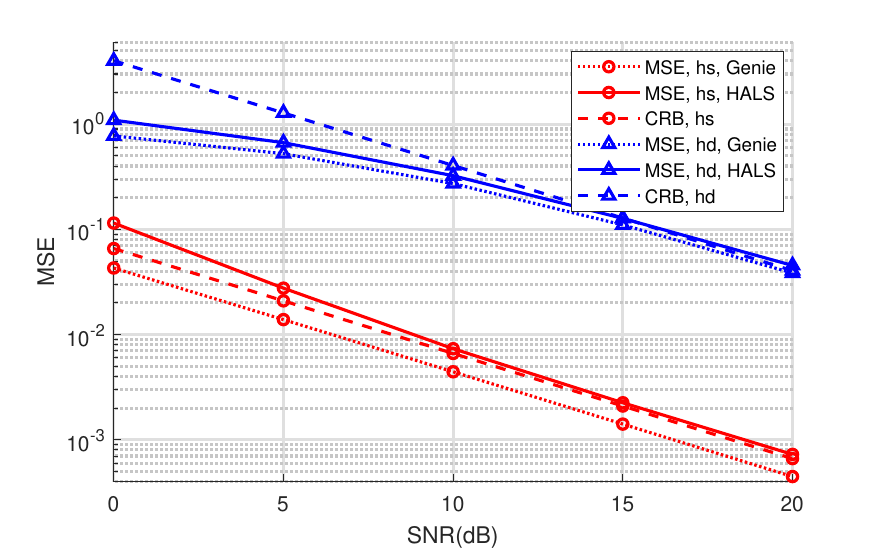}}
      \vspace*{-0.1in}
\caption{Normalized MSE/CRB on the sparse/diffuse channel coefficients against SNR. Parameter values are: $N=101$, $L=20$, $\omega=0.05$, $\beta = 0.01$, $m =4$ and $\delta = 1/(2L)$.}
\label{fig:exp1.1}
\end{figure}

Figures~\ref{fig:exp1} and~\ref{fig:exp1.1} benchmark the performance of the estimators to recover the channel response $\bm{h}^\star$, as well as the sparse components $\bm{h}_s^\star$ and the diffuse components $\bm{h}_d^\star$, against the CRB. The number of unknown parameters $3m+2L$ is 52, and the relative energy parameter $\beta$ is 0.01. Clearly, as shown in Figure~\ref{fig:exp1}, HALS improves channel estimation compared to the ANM estimator
and, unlike ANM, {HALS approaches the CRB for channel response.}
As the HALS estimator in  Equation \eqref{eq:als_primal_sdp} is {\bf biased} due to the regularization on diffuse components, its performance can be locally lower than CRB. Figure~\ref{fig:exp1.1} shows that the performance of HALS for estimating $\bm{h}_s^\star$ closely follows the CRB in Equation \eqref{eq:crb_channel_s_d}, especially when SNR is high. For the diffuse components, the HALS estimator is much better than the CRB in Equation \eqref{eq:crb_channel_s_d} in the low-SNR regime {due to the impact of bias}. The curves (Genie, HALS, and CRB) for $\bm{h}$ and $\bm{h}_d$ behave similarly as the SNR changes, where curves for $\bm{h}_s$ have different trends. These plots indicate that the estimation error of $\bm{h}_d^\star$ plays a major role in the overall channel estimation error, even though the diffuse components have much lower energy. The reason can be inferred from Figure~\ref{fig:exp1.1} which shows that the normalized performance for the sparse components is much better than that of the diffuse components. Thus, our HALS algorithm is better at estimating sparse components and frequency parameters. To improve our algorithm, we need to consider reducing estimation errors in diffuse components.
\begin{figure}[t!]
    \centering
    \begin{subfigure}[Frequencies, $\bm{f}$.]
        {\includegraphics[width=8.5cm]{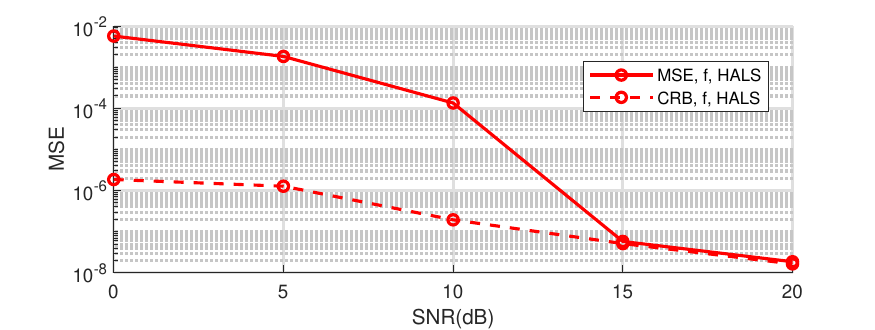}}
        \label{fig:exp2_1}
    \end{subfigure}
    \begin{subfigure}[Sparse amplitudes, $\bm{\alpha}$.]
    {\includegraphics[width=8.5cm]{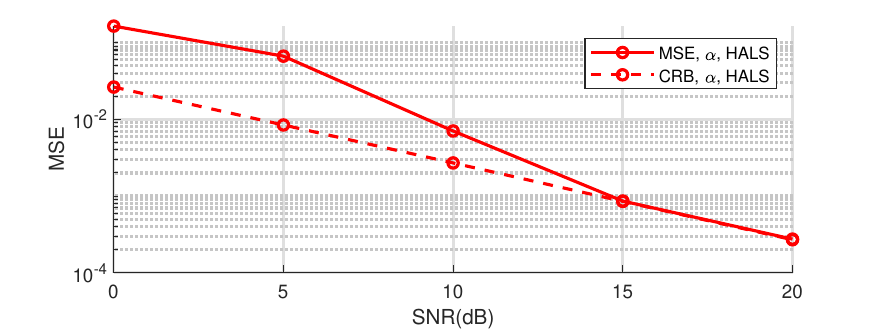}}
        \label{fig:exp2.2}
    \end{subfigure}
    \begin{subfigure}[Diffuse amplitudes, $\bm{\gamma}$.]
        {\includegraphics[width=8.5cm]{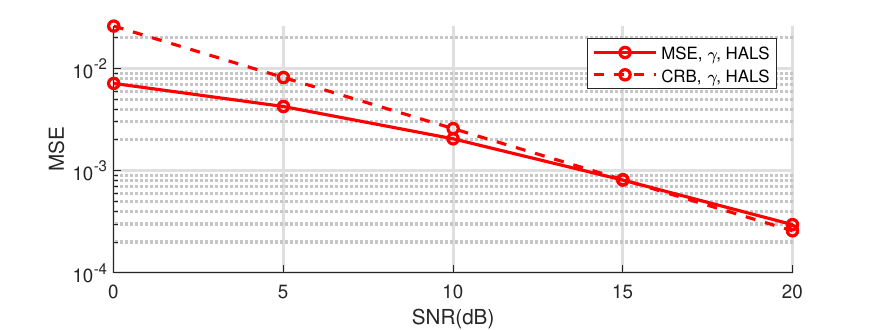}}
        \label{fig:exp2.3}
    \end{subfigure}
      \vspace*{-0.1in}
    \caption{MSE/CRB on channel parameters estimation against SNR. Parameter values are: $N=101$, $L=20$, $\omega=0.05$, $\beta = 0.01$, $m=4$ and $\delta = 1/(2L)$.}
    \vspace{-12pt}
    \label{fig:exp2}
\end{figure}

Figure~\ref{fig:exp2} shows the performance of HALS in estimating the frequencies of $\bm{h}^\star_s$, and amplitudes of $\bm{h}^\star_s$ and $\bm{h}^\star_d$, against CRBs, under different SNR values. The biased estimator HALS closely matches CRBs when estimating frequencies and amplitudes for sparse components in the high-SNR regime.
The errors drop sharply to CRBs at around 15dB. In comparison, the HALS algorithm's error in estimating diffuse amplitudes is better than the CRB at low SNR, but {slightly} exceeds it at higher SNR. This observation is consistent with Figure~\ref{fig:exp1.1} and indicates that the bias in channel estimation arises from $\hat{\bm{\gamma}}$.
\subsection{Impact of the Minimum Separation Between the Delays}
\begin{figure}[t!]
  \centerline{\includegraphics[width=8.5cm]{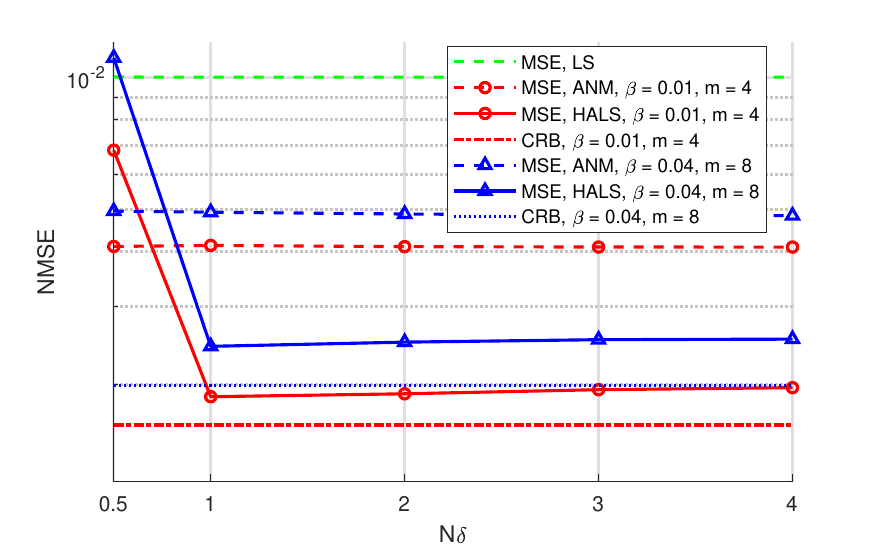}}
      \vspace*{-0.1in}
\caption{Normalized MSE/CRB on the channel coefficients against the separation values, $N\delta$. Parameter values are: $N=161$, $L=20$, $\omega=0.05$, $\beta = 0.01$, $m=4$ and $\text{SNR} = 20\text{dB}$.}
\label{fig:exp3.0}
\end{figure}
In Figure~\ref{fig:exp3.0}, we show how the minimum separation $N\delta$ influences the NMSE of the channel estimate. {When $N\delta \in [1,4]$ all the curves barely change.} As expected, HALS' performance is closer to the CRB than that of ANM. Increasing values of $m$ and $\beta$ increase the error, but HALS remains close to the CRB. However, if $N\delta$ changes to 0.5, the performance of HALS deteriorates dramatically. As illustrated in the CRB analysis (Section~\ref{sec:CRB}), the minimum separation influences the accuracy of the frequency estimate.  The debiasing step in HALS, which aims to eliminate bias, requires an accurate frequency estimate. If the estimate is inaccurate, we are projecting onto the wrong subspace, leading to even larger bias. As a result, HALS is more strongly affected when the minimal separation, $N\delta$, is quite small. ANM does not need debiasing when estimating aHSD channels.
\begin{figure}[t!]
    \centering
    \begin{subfigure}[Frequencies, $\bm{f}$.]
        {\includegraphics[width=8.5cm]{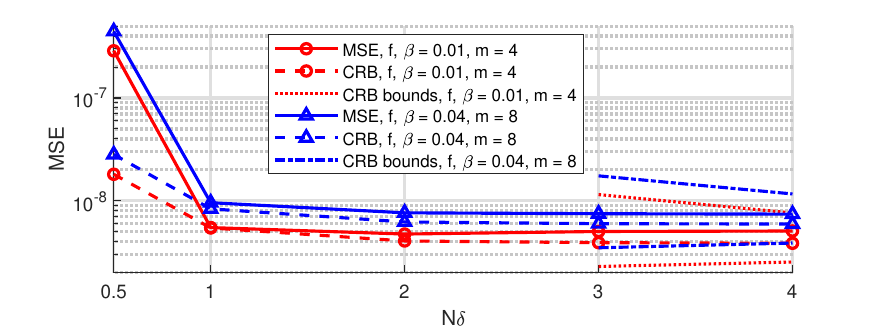}}
        \label{fig:exp3_1}
    \end{subfigure}
    \begin{subfigure}[Sparse amplitudes, $\bm{\alpha}$.]
        {\includegraphics[width=8.5cm]{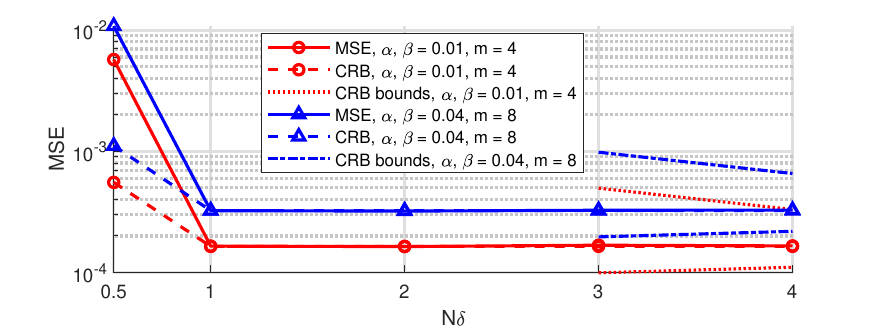}}
        \label{fig:exp3_2}
    \end{subfigure}
    \begin{subfigure}[Diffuse amplitudes, $\bm{\gamma}$.]
        {\includegraphics[width=8.5cm]{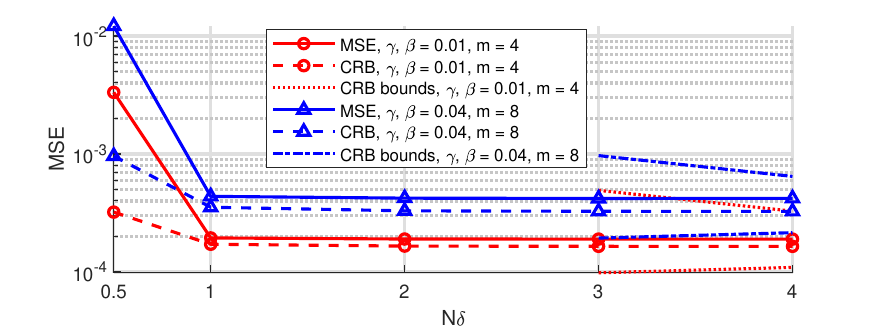}}
        \label{fig:exp3_3}
    \end{subfigure}
      \vspace*{-0.1in}
    \caption{MSE/CRB on channel parameters estimation against the separation values, $N\delta$. Parameter values are: $N=161$, $L=20$, $\omega=0.05$, $\beta = 0.01$, $m=4$ and $\text{SNR} = 20\text{dB}$.}
    \label{fig:exp3}
\end{figure}

Figure~\ref{fig:exp3} shows how HALS performs in estimating $\bm{f}^\star$, $\bm{\alpha}^\star$ and $\bm{\gamma}^\star$ against the CRB, at varying values of $N\delta$. As expected, at $N\delta=0.5$, the frequency estimate is poor. The amplitude estimate is also affected when frequencies cannot be accurately estimated. Similarly to Figure~\ref{fig:exp3.0}, we observe that the curves are flat and HALS approaches the CRBs when $N \delta \in[1,4]$, which certifies the robustness of HALS against decreasing minimal separations within a certain range. 
The bounds on CRBs in Proposition~\ref{prop:CRB_bounds} are also verified here, where the gap between lower and upper bounds narrows as $N\delta$ gets larger. 
Furthermore, we observe that the MSE and CRB curves overlap for the sparse amplitudes estimate. The CRBs better characterize sparse component estimation as the sparse amplitudes estimates are debiased. 

\subsection{Impact of the Number of Resolvable Paths}
\begin{figure}[t!]
    \centering
    \begin{subfigure}[Frequencies, $\bm{f}$.]
        {\includegraphics[width=8.5cm]{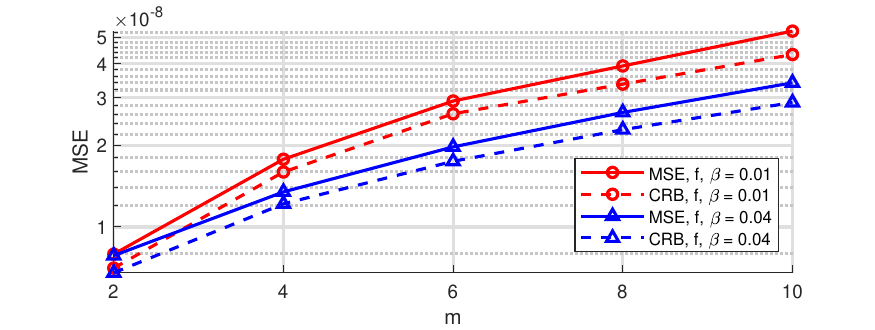}}
        \label{fig:exp4_1}
    \end{subfigure}
    \begin{subfigure}[Sparse amplitudes, $\bm{\alpha}$.]
        {\includegraphics[width=8.5cm]{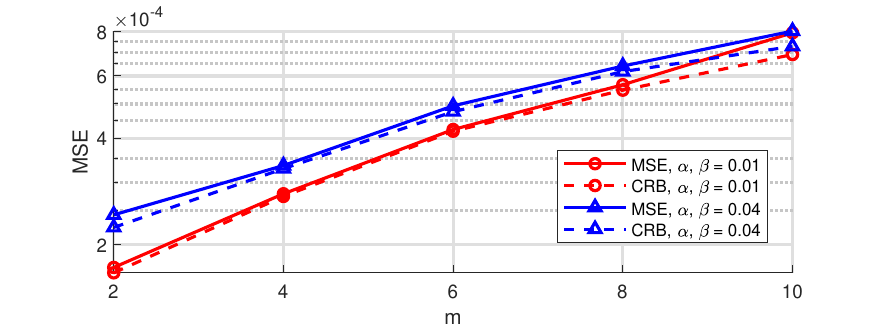}}
        \label{fig:exp4_2}
    \end{subfigure}
    \begin{subfigure}[Diffuse amplitudes, $\bm{\gamma}$.]
        {\includegraphics[width=8.5cm]{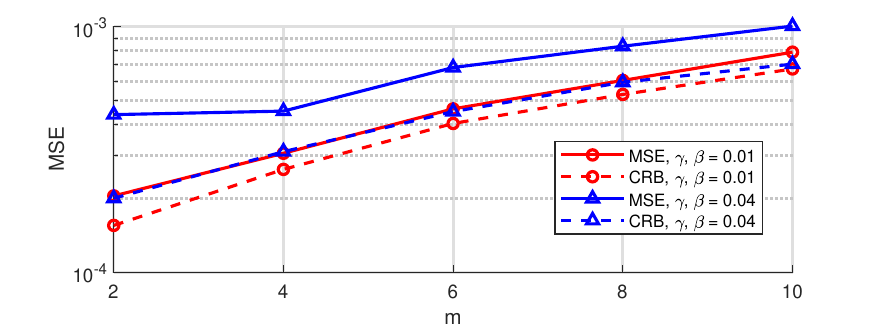}}
        \label{fig:exp4_3}
    \end{subfigure}
      \vspace*{-0.1in}
    \caption{MSE/CRB on channel parameters estimation against the number of sparse components, $m$. Parameter values are: $N=101$, $L=20$, $\omega=0.05$, $\delta = 1/(2L)$ and $\text{SNR} = 20\text{dB}$.}
    \label{fig:exp4}
\end{figure}
According to Figures~\ref{fig:exp3.0} and~\ref{fig:exp3}, changing from $(m, \beta) = (4, 0.01)$ to $(m, \beta) = (8, 0.04)$, the MSE worsens, but all curves still exhibit similar behavior, showing the robustness of HALS with respect to variations in $m$ and $\beta$. In Figure~\ref{fig:exp4}, we further investigate how parameter estimation with HALS varies for different values of $m$ and $\beta$, where the minimum separation is fixed. We see that the MSE and CRB curves increase as $m$ increases, while $\beta$ has little impact on the estimates of the sparse components. The performances of frequency and sparse amplitude estimates are close to the CRBs, indicating that HALS is also robust to variations in $m$ and $\beta$ when estimating sparse components. In comparison, the diffuse amplitude estimates are vulnerable to the increasing magnitude of diffuse components, which is proportional to $\beta$.
\subsection{Experiments on Real Channel Data}
\begin{figure}[t!]
  \centerline{\includegraphics[width=8.5cm]{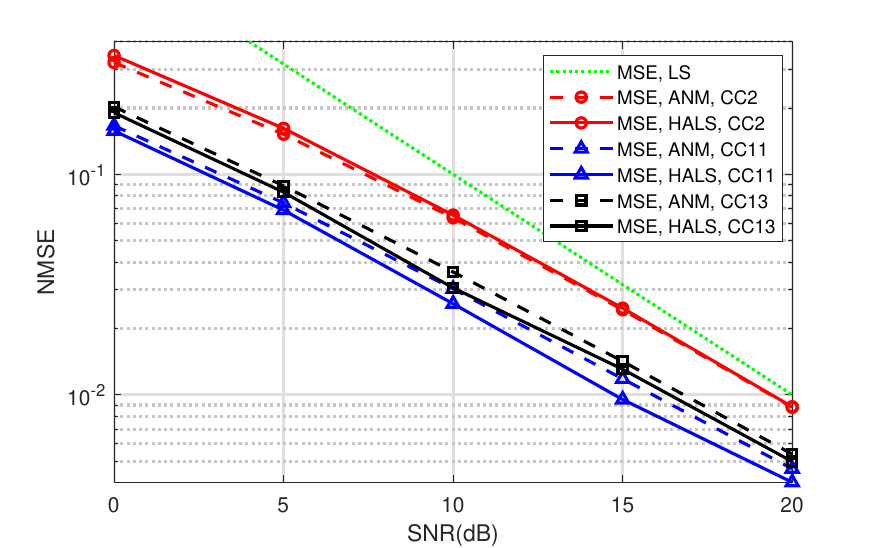}}
    \vspace*{-0.1in}
\caption{MSE/CRB on real channel estimation against the SNR values. Parameter values are: $N=201$, $\hat{L}=201$. CC2, CC11 and CC13 represent the 2nd, 11th and 13th channel in ``Child Care'' dataset from NIST\cite{uwbdata}.
}
\label{fig:exp5}
\end{figure}
For the experiments on real channels, we take the data from the NIST ``Child Care'' dataset \cite{uwbdata}. In the experiments, we set $\hat{L}$ to the estimated sampling size for the diffuse components in the channel. In the optimization problem, we instead use the estimated basis matrix
\begin{equation}
    \hat{\bm{D}}=\left[\bm{a}\left(0\right), \bm{a}\left(-\frac{1}{\hat{L}}\right), \cdots, \bm{a}\left(-\frac{\hat{L}-1}{\hat{L}}\right)\right]\in \mathbb{C}^{N \times \hat{L}}.
\end{equation}

As shown in Figure~\ref{fig:exp5}, we achieve performance improvements with HALS over ANM when channels better match our model assumptions. For a good aHSD channel, we assume the values of $m$ and $\beta$ are small, and $N\delta$ is large enough. The CC2 channel consists of many spikes and, more importantly, those spikes are not sufficiently separated. As expected, HALS does not provide significant improvement over ANM, and the gap between the LS estimator is narrow too. In contrast, when we apply HALS to channels better matched to the aHSD model, e.g., CC11 and CC13, HALS clearly outperforms ANM. Because the real dataset contains diffuse components with arbitrary delays, using $\hat{\bm{D}}\in \mathbb{C}^{N\times \hat{L}}$ to approximate the support of these components introduces a model mismatch.  However, HALS is never measurably worse than ANM.
Notice that when increasing $\hat{L}$ to reduce the mismatch, the minimal separation will decrease, making it harder to estimate the frequency parameters of sparse components. Therefore, setting the values of $\hat{L}$ properly is important for applying HALS to measured hybrid channels.

Simulations on real data validate the necessity of adopting the proposed aHSD model in environments where data are measured. Together with results on synthetic data, they indicate that better modeling and estimation of diffuse components are important for further performance improvement.
\vspace*{-0.05in}
\section{Conclusions}
\label{sec:conclusions}
In this work, we proposed an aHSD channel framework that maps continuous-time signals to the frequency domain using OFDM signaling. 
Based on the aHSD model, we also proposed the Hybrid Atomic-Least-Squares (HALS) algorithm, which regularizes sparse and diffuse terms with the atomic norm and the $\ell_2$ norm, respectively. We  provided a theoretical analysis on the relevant dual problems and exploited the properties of primal and dual solutions. We then derived a CRB for parameter estimation in our aHSD model and computed upper and lower bounds on the CRB, directly exploiting the aHSD structure with clear dependence on the minimum separation between spikes, spike amplitudes, and channel length. The analysis employs new tools to accommodate the aHSD structure. The CRB on the parameters leads to a CRB on the channel response. Our numerical results showed that HALS achieves near-CRB performance and has a strong improvement over ANM, especially in the high-SNR regime for channels following the aHSD model. Interestingly, HALS performance is sensitive to the interim quality of diffuse component estimation.
Our numerical results on real data agree with the observations from synthetic data and validate the  aHSD channel model.

\appendices

\section{Proof of Proposition~\ref{prop:exact_recovery}}\label{sec:proof_exact_recovery}
We denote by $\bm{\xi} = \bm{f}^\star \cup \{0, \cdots, \tfrac{L-1}{L}\}$ the union of the support of the sparse and diffuse components. Under the assumption of Proposition~\ref{prop:exact_recovery}, the set $\bm{\xi}$ is sufficiently separated, thus the assumption of~\cite[Prop. 2.3]{fernandez2016super} will hold. Hence, for any $\bm \gamma \in \mathbb{C}^L$, there exists a unique atomic decomposition of the form of Equation \eqref{eq:sparse_est_decomp} that realizes the atomic norm $\norm{\widetilde{\bm{y}} - \bm D \bm \gamma}_{\mathcal{A}}$. Furthermore, that decomposition is of the form
\(
\bm x = \sum\limits_{i=1}^m \alpha_i^\star \bm{a}\left(f_i^\star\right) + \bm{D}\left(\bm \gamma^\star - \bm \gamma \right).
\)
Therefore for any $\bm \gamma \in \mathbb{C}^L$, one has
\begin{equation}\label{eq:minimizer_atomic}
        \forall \bm \gamma \in \mathbb{C}^L, \quad \norm{\widetilde{\bm{y}} - \bm D \bm \gamma}_{\mathcal{A}} = \norm{\bm \alpha^\star}_{1} + \norm{\bm \gamma^\star - \bm \gamma}_{1}.
\end{equation}
It follows immediately from Equation \eqref{eq:minimizer_atomic} that Equation \eqref{eq:atomic_noiseless} is minimized if and only if $\widehat{\bm{\gamma}} = \bm{\gamma}^\star$, which yields the desired statement.

\section{Proof of Proposition~\ref{prop:primal_property}}\label{appendix:prop:primal_property}
We follow the proof structure of \cite[Lemma 1]{bhaskar2013atomic}, where the key idea is to combine the optimality of the objective function with the convexity of the atomic norm.
    
Denote the objective function in Program \eqref{eq:als_primal} as
\begin{equation}
    f\left(\bm{h}_s, \bm{\gamma}\right) = \frac{1}{2}\left\Vert\widetilde{\bm{y}}-\bm{h}_s-\bm{D}\bm{\gamma}\right\Vert_2^2 + \tau \left\Vert\bm{h}_s\right\Vert_{\mathcal{A}}+\frac{\lambda}{2} \left\Vert\bm{\gamma}\right\Vert_2^2.
    \label{eq:als_primal_obj}
\end{equation}
First, note that the objective function is minimized at the optimal solution $\left(\hat{\bm{h}}_s, \hat{\bm{\gamma}}\right)$. Therefore, for any $\bm{h}_s \in \mathbb{C}^N$, $\bm{\gamma}\in \mathbb{C}^L$ and $\alpha\in \mathbb{R}$, the following inequality is satisfied
\begin{equation}
    f\left(\hat{\bm{h}}_s+\alpha\left(\bm{h}_s-\hat{\bm{h}}_s\right), \hat{\bm{\gamma}}+\alpha\left(\bm{\gamma}-\hat{\bm{\gamma}}\right)\right) \geq f\left(\hat{\bm{h}}_s, \hat{\bm{\gamma}}\right). 
    \label{eq:als_primal_min}
\end{equation}

Substituting Equation \eqref{eq:als_primal_obj} into Equation \eqref{eq:als_primal_min} and rearranging the inequality yields
\begin{align}
\MoveEqLeft[0] \alpha^{-1}\tau\left(\left\Vert\hat{\bm{h}}_s+\alpha\left(\bm{h}_s-\hat{\bm{h}}_s\right)\right\Vert_{\mathcal{A}}-\left\Vert\hat{\bm{h}}_s\right\Vert_{\mathcal{A}}\right)&\nonumber
    \\
    &\geq\text{Re}\left<\widetilde{\bm{y}}-\hat{\bm{h}}_s-\bm{D}\hat{\bm{\gamma}}, \bm{h}_s+\bm{D}\bm{\gamma}-\left(\hat{\bm{h}}_s+\bm{D}\hat{\bm{\gamma}}\right)\right> \nonumber
    \\
    & - \lambda\text{Re}\left<\hat{\bm{\gamma}}, \bm{\gamma}-\hat{\bm{\gamma}}\right>-\frac{\alpha}{2}\left\Vert\bm{h}_s+\bm{D}\bm{\gamma}-\left(\hat{\bm{h}}_s+\bm{D}\hat{\bm{\gamma}}\right)\right\Vert_2^2\nonumber
    \\
    & - \frac{\alpha\lambda}{2}\left\Vert\bm{\gamma}-\hat{\bm{\gamma}}\right\Vert_2^2. \label{eq:als_primal_min_reform}
\end{align}

The atomic norm is a convex function, thus we have $\forall \alpha \in \left(0, 1\right]$, and the following inequality is satisfied
\begin{equation}
    \left\Vert\alpha \bm{h}_s+\left(1-\alpha\right)\hat{\bm{h}}_s\right\Vert_{\mathcal{A}} \leq \alpha \left\Vert\bm{h}_s\right\Vert_{\mathcal{A}}+\left(1-\alpha\right)\left\Vert \hat{\bm{h}}_s\right\Vert_{\mathcal{A}},
\end{equation}
which can be rearranged into
\begin{equation}
    \alpha^{-1}\left(\left\Vert\hat{\bm{h}}_s+\alpha\left(\bm{h}_s-\hat{\bm{h}}_s\right)\right\Vert_{\mathcal{A}}-\left\Vert\hat{\bm{h}}_s\right\Vert_{\mathcal{A}}\right) \leq \left\Vert\bm{h}_s\right\Vert_{\mathcal{A}}-\left\Vert\hat{\bm{h}}_s\right\Vert_{\mathcal{A}} \hspace{-3pt},
    \label{eq:atomic_norm_convex}
\end{equation}
where the LHS is in the subgradient of the atomic norm at $\hat{\bm{h}}_s$, if we let $\alpha\rightarrow0$.

Combining  Equation \eqref{eq:als_primal_min_reform} with Equation \eqref{eq:atomic_norm_convex} leads to
\begin{align}
    &\tau\left(\left\Vert\bm{h}_s\right\Vert_{\mathcal{A}}-\left\Vert\hat{\bm{h}}_s\right\Vert_{\mathcal{A}}\right)\nonumber
    \\
    \overset{(a)}{\geq}&\lim\limits_{\alpha \rightarrow 0}\alpha^{-1}\tau\left(\left\Vert\hat{\bm{h}}_s+\alpha\left(\bm{h}_s-\hat{\bm{h}}_s\right)\right\Vert_{\mathcal{A}}-\left\Vert\hat{\bm{h}}_s\right\Vert_{\mathcal{A}}\right)\nonumber
    \\
    \overset{(b)}{\geq} &\lim\limits_{\alpha \rightarrow 0}\left[\text{Re}\left<\widetilde{\bm{y}}-\hat{\bm{h}}_s-\bm{D}\hat{\bm{\gamma}}, \bm{h}_s+\bm{D}\bm{\gamma}-\left(\hat{\bm{h}}_s+\bm{D}\hat{\bm{\gamma}}\right)\right> \right.\nonumber
    \\
    &- \lambda\text{Re}\left<\hat{\bm{\gamma}}, \bm{\gamma}-\hat{\bm{\gamma}}\right>-\frac{\alpha}{2}\left\Vert\bm{h}_s+\bm{D}\bm{\gamma}-\left(\hat{\bm{h}}_s+\bm{D}\hat{\bm{\gamma}}\right)\right\Vert_2^2 \nonumber
    \\
    &\left.-\frac{\alpha\lambda}{2}\left\Vert\bm{\gamma}-\hat{\bm{\gamma}}\right\Vert_2^2 \right]\nonumber
    \\
    \overset{(c)}{=} &\text{Re}\left<\widetilde{\bm{y}}-\hat{\bm{h}}_s-\bm{D}\hat{\bm{\gamma}}, \bm{h}_s+\bm{D}\bm{\gamma}-\left(\hat{\bm{h}}_s+\bm{D}\hat{\bm{\gamma}}\right)\right>\nonumber
    \\
    &- \lambda\text{Re}\left<\hat{\bm{\gamma}}, \bm{\gamma}-\hat{\bm{\gamma}}\right> \nonumber
    \\
    \overset{(d)}{=} &-\text{Re}\left<\widetilde{\bm{y}}-\hat{\bm{h}}_s-\bm{D}\hat{\bm{\gamma}}, \hat{\bm{h}}_s\right> - \lambda\text{Re}\left<\hat{\bm{\gamma}}, \bm{\gamma}-\hat{\bm{\gamma}}\right>
    \nonumber
    \\
    &+ \text{Re}\left<\widetilde{\bm{y}}-\hat{\bm{h}}_s-\bm{D}\hat{\bm{\gamma}}, \bm{h}_s+\bm{D}\bm{\gamma}-\bm{D}\hat{\bm{\gamma}}\right> \label{eq:als_primal_min_reform2.5},
\end{align}
where $(a)$ holds  $\forall \alpha\in (0, 1]$ because we multiply both the LHS and RHS of  Equation \eqref{eq:atomic_norm_convex} by 
a non-negative number $\tau$. Equality $(b)$ is derived by plugging Equation \eqref{eq:als_primal_min_reform} in the equation above. Equality $(c)$ holds because we eliminate terms that go to zero as $\alpha\rightarrow0$; finally, $(d)$ is due to the linearity property of the inner product.

The  LHS and RHS Inequality \eqref{eq:als_primal_min_reform2.5} can be rearranged to yield,
\begin{align}
    &\tau\left\Vert\hat{\bm{h}}_s\right\Vert_{\mathcal{A}}-\text{Re}\left<\widetilde{\bm{y}}-\hat{\bm{h}}_s-\bm{D}\hat{\bm{\gamma}}, \hat{\bm{h}}_s\right>\nonumber
    \\
  {\leq} & \tau\left\Vert\bm{h}_s\right\Vert_{\mathcal{A}}-\text{Re}\left<\widetilde{\bm{y}}-\hat{\bm{h}}_s-\bm{D}\hat{\bm{\gamma}}, \bm{h}_s+\bm{D}\bm{\gamma}-\bm{D}\hat{\bm{\gamma}}\right>\nonumber
    \\
    &+ \lambda\text{Re}\left<\hat{\bm{\gamma}}, \bm{\gamma}-\hat{\bm{\gamma}}\right> \nonumber
    \\
    \overset{(e)}{=}& \underbrace{\left(\tau\left\Vert\bm{h}_s\right\Vert_{\mathcal{A}}-\text{Re}\left<\widetilde{\bm{y}}-\hat{\bm{h}}_s-\bm{D}\hat{\bm{\gamma}}, \bm{h}_s\right>\right)}_{\left(\text{A}\right)}\nonumber
    \\
    &\underbrace{- \text{Re}\left<\bm{D}^{\herm}\left(\widetilde{\bm{y}}-\hat{\bm{h}}_s-\bm{D}\hat{\bm{\gamma}}\right)-\lambda \hat{\bm{\gamma}}, \bm{\gamma}-\hat{\bm{\gamma}}\right>}_{\left(\text{B}\right)}. \label{eq:als_primal_min_reform3}
\end{align}

Equality $(e)$ is due to the conjugate symmetry and linearity property of the inner product.

We use proof by contradiction to show that the RHS of Equation \eqref{eq:als_primal_min_reform3} cannot be minus infinity. Suppose the RHS of Equation \eqref{eq:als_primal_min_reform3} is minus infinity, which indicates that the LHS is also minus infinity. Then either $\hat{\bm{\gamma}}$ or $\hat{\bm{h}}_s$ must be of infinite magnitude. This leads to the fact that the primal objective function in Equation \eqref{eq:als_primal_obj} is infinity, given both $\lambda$ and $\tau$ are positive. Clearly, we observe that $f\left(\bm{0}, \bm{0}\right)$ is finite and thus smaller than the optimal value of the objective function. This leads to a contradiction, and our proof is completed.

Recall that Equation \eqref{eq:als_primal_min_reform3} holds for any $\bm{h}_s\in\mathbb{C}^N$ and $\bm{\gamma}\in \mathbb{C}^L$. Neither term (A) nor (B) can go to minus infinity. Then, the following conditions should be satisfied
\begin{subequations}
    \begin{align}
    \left\Vert\widetilde{\bm{y}}-\hat{\bm{h}}_s-\bm{D}\hat{\bm{\gamma}}\right\Vert_{\mathcal{A}}^* &\leq \tau;
     \label{eq:lemma2_1_appendix}
    \\
    \bm{D}^{\herm}\left(\widetilde{\bm{y}}-\hat{\bm{h}}_s-\bm{D}\hat{\bm{\gamma}}\right)&=\lambda \hat{\bm{\gamma}},
    \label{eq:lemma2_2_appendix}
\end{align}
\end{subequations}
where Equation \eqref{eq:lemma2_1_appendix} holds, since the dual norm definition indicates that $\exists \bm{u}_s \in \mathbb{C}^N$ such that
\begin{equation}
\text{Re}\left<\widetilde{\bm{y}}-\hat{\bm{h}}_s-\bm{D}\hat{\bm{\gamma}}, \bm{u}_s\right>= \left\Vert\widetilde{\bm{y}}-\hat{\bm{h}}_s-\bm{D}\hat{\bm{\gamma}}\right\Vert_{\mathcal{A}}^* \left\Vert\bm{u}_s\right\Vert_{\mathcal{A}}.
\end{equation}

If we take $\bm{h}_s = k \bm{u}_s$ with $k \in \mathbb{R}$, then, term (A) can be rewritten as $k\left(\tau-\left\Vert\widetilde{\bm{y}}-\hat{\bm{h}}_s-\bm{D}\hat{\bm{\gamma}}\right\Vert_{\mathcal{A}}^*\right)\left\Vert\bm{u}_s\right\Vert_{\mathcal{A}}$. If Equation \eqref{eq:lemma2_1_appendix} is not satisfied, the term (A) will go minus infinity as $k\rightarrow\infty$. Thus Equation \eqref{eq:lemma2_1_appendix} holds.

Equation \eqref{eq:lemma2_2_appendix} holds as term (B) cannott be minus infinity. It is easy to see that we can find some $\bm{\gamma} \in \mathbb{C}^L$ to contradict this conclusion if the first vector in the inner product is not a zero vector. Therefore, Equation \eqref{eq:lemma2_2_appendix} should be satisfied.

When $\bm{h}_s = \bm{0}$ and $\bm{\gamma} = \hat{\bm{\gamma}}$, Equation \eqref{eq:als_primal_min_reform3} indicates that
\begin{equation}
    \tau\left\Vert\hat{\bm{h}}_s\right\Vert_{\mathcal{A}}-\text{Re}\left<\widetilde{\bm{y}}-\hat{\bm{h}}_s-\bm{D}\hat{\bm{\gamma}}, \hat{\bm{h}}_s\right> \leq 0.
    \label{eq:dual_norm_conclusion}
\end{equation}

The dual norm inequality and Equation \eqref{eq:lemma2_1_appendix} show that $\tau\left\Vert\hat{\bm{h}}_s\right\Vert_{\mathcal{A}}-\text{Re}\left<\widetilde{\bm{y}}-\hat{\bm{h}}_s-\bm{D}\hat{\bm{\gamma}}, \hat{\bm{h}}_s\right> \geq 0$. Therefore, with Equation \eqref{eq:dual_norm_conclusion}, Equation \eqref{eq:lemma2_3} should be satisfied.

\section{Proof of Proposition~\ref{prop:dual_prolem}}\label{appendix:prop:dual_prolem}

To formulate the dual problem of $\text{(P1)}$, we follow the proof structure of \cite[Lemma 2]{bhaskar2013atomic}.

First, an equivalent constrained problem of Program \eqref{eq:als_primal} is given by
\begin{align}
    \min\limits_{\bm{h}_s, \bm{\gamma}, \bm{x}, \bm{b} \in \mathbb{C}^N} & \frac{1}{2}\left\Vert\widetilde{\bm{y}}-\bm{h}_s-\bm{D}\bm{\gamma}\right\Vert_2^2 + \tau \left\Vert\bm{x}\right\Vert_{\mathcal{A}}+\frac{\lambda}{2} \left\Vert\bm{b}\right\Vert_2^2\nonumber
    \\
    \text{s.t.  } & \bm{x} = \bm{h}_s \text{ and } \bm{b} = \bm{\gamma}. \label{eq:als_primal_constrained}
\end{align}
By introducing two Lagrangian multipliers $\bm{z}$ and $\bm{v}$, the Lagrangian of this constrained problem is formulated as
\begin{align}
    \MoveEqLeft[0] \mathcal{L}\left(\bm{h}_s, \bm{\gamma}, \bm{x}, \bm{b};\bm{z}, \bm{v}\right)& \nonumber
    \\
    & \overset{(a)}{=} \frac{1}{2}\left\Vert\widetilde{\bm{y}}-\bm{h}_s-\bm{D}\bm{\gamma}\right\Vert_2^2 + \tau \left\Vert\bm{x}\right\Vert_{\mathcal{A}}+\frac{\lambda}{2} \left\Vert\bm{b}\right\Vert_2^2\nonumber
    \\
    &\quad -\text{Re}\left<\bm{z}, \bm{x}-\bm{h}_s\right>-\text{Re}\left<\bm{v}, \bm{b}-\bm{\gamma}\right> \nonumber
    \\
    &\overset{(b)}{=} \underbrace{\frac{1}{2}\left\Vert\widetilde{\bm{y}}\right\Vert_2^2}_{\left(\text{A}\right)}+\underbrace{\left[\frac{1}{2}\left\Vert\bm{h}_s+\bm{D}\bm{\gamma}-\left(\widetilde{\bm{y}}-\bm{z}\right)\right\Vert_2^2\right.}_{\left(\text{B}\right)}\nonumber
    \\
    &\quad -\underbrace{\left.\frac{1}{2}\left\Vert\widetilde{\bm{y}}-\bm{z}\right\Vert_2^2+\text{Re}\left\{\left<\bm{v}-\bm{D}^{\herm}\bm{z}, \bm{\gamma}\right>\right\}\right]}_{\left(\text{B}\right)}\nonumber
    \\
    & \quad + \underbrace{\left[\tau\left\Vert\bm{x}\right\Vert_{\mathcal{A}}-\text{Re}\left<\bm{z}, \bm{x}\right>\right]}_{\left(\text{C}\right)}+\underbrace{\left[\frac{\lambda}{2}\left\Vert\bm{b}-\frac{1}{\lambda}\bm{v}\right\Vert_2^2-\frac{1}{2\lambda}\left\Vert\bm{v}\right\Vert_2^2\right]}_{\left(\text{D}\right)},
    \label{eq:als_lagrangian}
\end{align}
where in $(a)$, the Lagrangian is constructed via the dual variables associated with the equality constraint. In $(b)$, the inner products and norms are expanded and rearranged into four parts (A)-(D), and the square terms in (B) and (D) are completed.

Then the infimum of the Lagrangian over $\bm{h}_s$, $\bm{\gamma}$, $\bm{x}$ and $\bm{b}$ is derived as 
\begin{align}
    \MoveEqLeft[0] \inf\limits_{\bm{h}_s, \bm{\gamma}, \bm{x}, \bm{b}}\mathcal{L}\left(\bm{h}_s, \bm{\gamma}, \bm{x}, \bm{b};\bm{z}, \bm{v}\right) &\nonumber
    \\
    & \overset{(c)}{=} \underbrace{\frac{1}{2}\left\Vert\widetilde{\bm{y}}\right\Vert_2^2}_{\left(\text{A}\right)}+\inf\limits_{\bm{h}_s, \bm{\gamma}}\underbrace{\left[\frac{1}{2}\left\Vert\bm{h}_s+\bm{D}\bm{\gamma}-\left(\widetilde{\bm{y}}-\bm{z}\right)\right\Vert_2^2\right.}_{\left(\text{B}\right)}\nonumber
    \\
    &\quad -\underbrace{\left.\frac{1}{2}\left\Vert\widetilde{\bm{y}}-\bm{z}\right\Vert_2^2+\text{Re}\left<\bm{v}-\bm{D}^{\herm}\bm{z}, \bm{\gamma}\right>\right]}_{\left(\text{B}\right)}+\nonumber
    \\
    &\quad\inf\limits_{\bm{x}}\underbrace{\left[\tau\left\Vert\bm{x}\right\Vert_{\mathcal{A}}-\text{Re}\left<\bm{z}, \bm{x}\right>\right]}_{\left(\text{C}\right)}+\inf\limits_{\bm{b}}\underbrace{\left[\frac{\lambda}{2}\left\Vert\bm{b}-\frac{1}{\lambda}\bm{v}\right\Vert_2^2-\frac{\left\Vert\bm{v}\right\Vert_2^2}{2\lambda}\right]}_{\left(\text{D}\right)},
    \label{eq:als_lag_inf}
\end{align}
where $(c)$ holds because term (A) is a constant and term (B), (C) and (D) only contains variables $\left(\bm{h}_s, \bm{\gamma}\right)$, $\bm{x}$ and $\bm{b}$ respectively. 

We see that the infimum of the Lagrangian is separable into four parts. Term (A) is constant, whose infimum is itself. Term (C) is a function of $\bm{x}$ only. According to \cite{bhaskar2013atomic}, the infimum of (C) over $\bm{x}$ leads to an indicator function, which is equivalent to the constraint $\left\Vert\bm{z}\right\Vert_\mathcal{A}^*\leq \tau$. We move the terms containing $\bm{b}$ only to (D), the infimum is achieved at $\bm{b} = \frac{1}{\lambda}\bm{v}$. All terms containing $\left(\bm{h}_s, \bm{\gamma}\right)$ are arranged in part (B). It can be shown that $\bm{v}-\bm{D}^{\herm}\bm{z} = \bm{0}$, or the infimum will go minus infinity. The infimum is then achieved at $\bm{h}_s+\bm{D}\bm{\gamma} = \widetilde{\bm{y}}-\bm{z}$. Thus, the dual problem is formulated as $\text{(P3)}$.

In Equation \eqref{eq:als_primal_constrained}, the primal objective function is convex, because the $\ell_2$ norm and atomic norm are convex, and the linear combination of convex functions multiplied by some positive constant is also convex. The equality constraint is affine and obviously feasible. Then strong duality holds, since Slater’s condition \cite{boyd2004convex} is satisfied. 
\section{Proof of Corollary~\ref{cor:dual_primal_residue}}\label{appendix:cor:dual_primal_residue}
 Denote the dual objective function in Program \eqref{eq:als_dual} as
    \begin{equation}
        g\left(\bm{z}\right)=\frac{1}{2}\left\Vert\widetilde{\bm{y}}\right\Vert_2^2 -\frac{1}{2}\left\Vert\widetilde{\bm{y}}-\bm{z}\right\Vert_2^2-\frac{1}{2\lambda}\left\Vert\bm{D}^{\herm}\bm{z}\right\Vert_2^2,
        \label{eq:dual_obj}
    \end{equation}
    and $\hat{\bm{z}} \triangleq \widetilde{\bm{y}}-\hat{\bm{h}}_s-\bm{D}\hat{\bm{\gamma}}$. We have that
    \begin{align}
    g\left(\hat{\bm{z}}\right)=&\frac{1}{2}\left\Vert\widetilde{\bm{y}}\right\Vert_2^2 -\frac{1}{2}\left\Vert\widetilde{\bm{y}}-\hat{\bm{z}}\right\Vert_2^2-\frac{1}{2\lambda}\left\Vert\bm{D}^{\herm}\hat{\bm{z}}\right\Vert_2^2 \nonumber
        \\
        \overset{(a)}{=}&\frac{1}{2}\left\Vert\hat{\bm{z}} +\hat{\bm{h}}_s+\bm{D}\hat{\bm{\gamma}}\right\Vert_2^2 -\frac{1}{2}\left\Vert\hat{\bm{h}}_s+\bm{D}\hat{\bm{\gamma}}\right\Vert_2^2\nonumber
        \\
        &-\frac{1}{2\lambda}\left\Vert\bm{D}^{\herm}\left(\widetilde{\bm{y}}-\hat{\bm{h}}_s-\bm{D}\hat{\bm{\gamma}}\right)\right\Vert_2^2 \nonumber
        \\
        \overset{(b)}{=}&\frac{1}{2}\left\Vert\hat{\bm{z}} \right\Vert_2^2+ \text{Re}\left<\hat{\bm{z}}, \hat{\bm{h}}_s+\bm{D}\hat{\bm{\gamma}}\right>-\frac{1}{2\lambda}\left\Vert\lambda\hat{\bm{\gamma}}\right\Vert_2^2 \nonumber
        \\
        \overset{(c)}{=}&\frac{1}{2}\left\Vert\hat{\bm{z}} \right\Vert_2^2+ \text{Re}\left<\widetilde{\bm{y}}-\hat{\bm{h}}_s-\bm{D}\hat{\bm{\gamma}}, \hat{\bm{h}}_s\right>\nonumber
        \\
        &+\text{Re}\left<\widetilde{\bm{y}}-\hat{\bm{h}}_s-\bm{D}\hat{\bm{\gamma}}, \bm{D}\hat{\bm{\gamma}}\right>-\frac{\lambda}{2}\left\Vert\hat{\bm{\gamma}}\right\Vert_2^2 \nonumber
        \\
         \overset{(d)}{=}&\frac{1}{2}\left\Vert\hat{\bm{z}} \right\Vert_2^2+\text{Re}\left<\bm{D}^{\herm}\left(\widetilde{\bm{y}}-\hat{\bm{h}}_s-\bm{D}\hat{\bm{\gamma}}\right), \hat{\bm{\gamma}}\right>\nonumber
         \\
         &+ \tau \left\Vert\hat{\bm{h}}_s\right\Vert_{\mathcal{A}}-\frac{\lambda}{2}\left\Vert\hat{\bm{\gamma}}\right\Vert_2^2 \nonumber
         \\
         \overset{(e)}{=}&\frac{1}{2}\left\Vert\widetilde{\bm{y}}-\hat{\bm{h}}_s-\bm{D}\hat{\bm{\gamma}} \right\Vert_2^2+ \tau \left\Vert\hat{\bm{h}}_s\right\Vert_{\mathcal{A}}+\frac{\lambda}{2}\left\Vert\hat{\bm{\gamma}}\right\Vert_2^2
    \end{align}
    where in $(a)$, we substitute $\widetilde{\bm{y}} = \hat{\bm{z}}+\hat{\bm{h}}_s-\bm{D}\hat{\bm{\gamma}}$ in the first two terms and $\hat{\bm{z}} = \widetilde{\bm{y}}-\hat{\bm{h}}_s-\bm{D}\hat{\bm{\gamma}}$ in the last term. Equality $(b)$ is derived by expanding the first $\ell_2$ norm and applying  Equation \eqref{eq:lemma2_2} to the last $\ell_2$ norm. Equality $(c)$ holds due to the linearity property of inner product and $\hat{\bm{z}} = \widetilde{\bm{y}}-\hat{\bm{h}}_s-\bm{D}\hat{\bm{\gamma}}$. Equality $(d)$ holds due to Equation \eqref{eq:lemma2_3} and the conjugate symmetry property of the inner product.
   Equality $(e)$ holds due to Equation  \eqref{eq:lemma2_2}, $\hat{\bm{z}} = \widetilde{\bm{y}}-\hat{\bm{h}}_s-\bm{D}\hat{\bm{\gamma}}$ and $\text{Re}\left<\lambda\hat{\bm{\gamma}}, \hat{\bm{\gamma}}\right> = \lambda\left\Vert\hat{\bm{\gamma}}\right\Vert_2^2$.
    Hence, the dual objective $g\left(\hat{\bm{z}}\right)$ is equal to the primal objective function $f\left(\hat{\bm{h}}_s, \hat{\bm{\gamma}}\right)$, which indicates that $\hat{\bm{z}}$ is a dual optimal and $\left(\hat{\bm{h}}_s, \hat{\bm{\gamma}}\right)$ is a primal optimal.
\section{Proof of Proposition~\ref{prop:CRB_bounds}}\label{appendix:prop:CRB_bounds}
We begin with the proof of the following lemma.
\begin{lemma}\label{lemma:singular_bound}
Let $2N\geq 3(m+L)$ and
\begin{subequations}
    \begin{align}
\bm Q_0 &= \begin{bmatrix}
        \sqrt{2}\bm B_0 & \bm B_1
    \end{bmatrix} \in \mathbb{C}^{N \times 2(m+L)};
    \\
    \bm Q_1 &= \begin{bmatrix}
        \bm B_0 & \bm B_1 & \bm 0 \\
        \overline{\bm B_0} & \bm 0 & \overline{\bm B_1}
    \end{bmatrix} \in \mathbb{C}^{2N \times 3(m+L)},
    \label{eq:Q1}
\end{align}
\end{subequations}
where $\bm B_0\in \mathbb{C}^{N \times (m+L)}$ and $\bm B_1\in \mathbb{C}^{N \times (m+L)}$ are arbitrary complex matrices such that $\bm{Q}_0$ is of full column rank. Then
\begin{subequations}
    \begin{align}
    \sigma_{\min}(\bm{Q}_1)&\geq\sigma_{\min}(\bm{Q}_0); \label{eq:Q_min_thm}
    \\
    \sigma_{\max}(\bm{Q}_1)&\leq\sigma_{\max}(\bm{Q}_0). \label{eq:Q_max_thm}
\end{align}
\end{subequations}

\end{lemma}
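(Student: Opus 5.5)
The plan is to compare the Rayleigh quotients of $\bm Q_1^\herm \bm Q_1$ and $\bm Q_0^\herm \bm Q_0$ through the variational characterisation
\[
\sigma_{\min}(\bm Q)^2=\min_{\bm x\neq \bm 0}\frac{\Vert \bm Q\bm x\Vert_2^2}{\Vert\bm x\Vert_2^2},\qquad \sigma_{\max}(\bm Q)^2=\max_{\bm x\neq \bm 0}\frac{\Vert \bm Q\bm x\Vert_2^2}{\Vert\bm x\Vert_2^2}.
\]
The assumption $2N\geq 3(m+L)$ makes $\bm Q_1$ tall, so $\sigma_{\min}(\bm Q_1)$ is the smallest of its $3(m+L)$ singular values and the characterisation above applies. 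The full column rank of $\bm Q_0$ guarantees $\sigma_{\min}(\bm Q_0)>0$.

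Take an arbitrary $\bm x=[\bm u^\top,\bm v^\top,\bm w^\top]^\top\in\mathbb{C}^{3(m+L)}$ partitioned according to the three column blocks of $\bm Q_1$ in Equation~\eqref{eq:Q1}. First I compute
\[
\Vert\bm Q_1\bm x\Vert_2^2=\Vert \bm B_0\bm u+\bm B_1\bm v\Vert_2^2+\Vert \overline{\bm B_0}\bm u+\overline{\bm B_1}\bm w\Vert_2^2 .
\]
Conjugating inside the second norm does not change its value, so it equals $\Vert \bm B_0\overline{\bm u}+\bm B_1\overline{\bm w}\Vert_2^2$.

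The key step is to rewrite each term as $\bm Q_0$ applied to a vector, absorbing the $\sqrt2$ weight. I set
\[
\bm x_1=\left[\tfrac{1}{\sqrt2}\bm u^\top,\bm v^\top\right]^\top,\qquad \bm x_2=\left[\tfrac{1}{\sqrt2}\overline{\bm u}^\top,\overline{\bm w}^\top\right]^\top .
\]
Then $\bm Q_0\bm x_1=\bm B_0\bm u+\bm B_1\bm v$ and $\bm Q_0\bm x_2=\bm B_0\overline{\bm u}+\bm B_1\overline{\bm w}$. Hence $\Vert\bm Q_1\bm x\Vert_2^2=\Vert\bm Q_0\bm x_1\Vert_2^2+\Vert\bm Q_0\bm x_2\Vert_2^2$. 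The norms also add up correctly:
\[
\Vert\bm x_1\Vert_2^2+\Vert\bm x_2\Vert_2^2=\tfrac12\Vert\bm u\Vert_2^2+\Vert\bm v\Vert_2^2+\tfrac12\Vert\bm u\Vert_2^2+\Vert\bm w\Vert_2^2=\Vert\bm x\Vert_2^2 .
\]

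Finally I bound each term using $\sigma_{\min}(\bm Q_0)^2\Vert\bm x_k\Vert_2^2\leq\Vert\bm Q_0\bm x_k\Vert_2^2\leq\sigma_{\max}(\bm Q_0)^2\Vert\bm x_k\Vert_2^2$ for $k=1,2$. Summing the two inequalities gives
\[
\sigma_{\min}(\bm Q_0)^2\Vert\bm x\Vert_2^2\leq\Vert\bm Q_1\bm x\Vert_2^2\leq\sigma_{\max}(\bm Q_0)^2\Vert\bm x\Vert_2^2
\]
for every $\bm x$. Taking the minimum and maximum over $\bm x\neq\bm 0$ yields \eqref{eq:Q_min_thm} and \eqref{eq:Q_max_thm}.

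The only real obstacle is finding this splitting of $\bm u$ across the two copies with weight $1/\sqrt2$. This is exactly what the factor $\sqrt2$ in $\bm Q_0$ is designed to compensate, and once the splitting is found the rest is routine.
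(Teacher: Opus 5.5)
Your proof is correct, and it takes a genuinely different route from the paper's. The paper works through extremal singular vectors of $\bm Q_1$. It notes that the antilinear map $[\bm a_1;\bm a_2;\bm a_3]\mapsto[\overline{\bm a_1};\overline{\bm a_3};\overline{\bm a_2}]$ preserves both $\Vert\cdot\Vert_2$ and $\Vert\bm Q_1\,\cdot\Vert_2$. From this it concludes that the bottom eigenspace of $\bm Q_1^\herm\bm Q_1$ contains a unit vector $[\bm q_1;\bm q_2;\overline{\bm q_2}]$ with $\bm q_1$ real, and then evaluates $\bm Q_0$ on $[\bm q_1;\sqrt2\,\bm q_2]$. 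The upper bound is left to ``analogous reasoning.''

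You instead split the shared block $\bm u$ with weight $1/\sqrt2$ between two inputs of $\bm Q_0$, for an \emph{arbitrary} $\bm x$. This gives the exact identity $\Vert\bm Q_1\bm x\Vert_2^2=\Vert\bm Q_0\bm x_1\Vert_2^2+\Vert\bm Q_0\bm x_2\Vert_2^2$ with $\Vert\bm x_1\Vert_2^2+\Vert\bm x_2\Vert_2^2=\Vert\bm x\Vert_2^2$. Every Rayleigh quotient of $\bm Q_1$ is therefore a convex combination of two Rayleigh quotients of $\bm Q_0$, and both inequalities follow at once.

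What your version buys:
\begin{itemize}
\item It is shorter and treats the lower and upper bounds symmetrically.
\item It avoids a small hole in the paper's argument, which never checks that $\bm a+\widetilde{\bm a}\neq\bm 0$. This can fail, e.g.\ when $\bm a=j\bm q$ with $\bm q$ conjugate-symmetric, and must be patched by replacing $\bm a$ with $j\bm a$.
\end{itemize}

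What the paper's version buys: a structural fact yours does not state. The extremal singular vectors of $\bm Q_1$ can be chosen conjugate-symmetric, matching the complexified parameterization $[\bm f;\bm\alpha;\bm\gamma;\overline{\bm\alpha};\overline{\bm\gamma}]$ behind the FIM.

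The two arguments are consistent. On such a conjugate-symmetric vector your two pieces coincide, $\bm x_1=\bm x_2$, and $\sqrt2\,\bm x_1$ is exactly the paper's test vector $[\bm q_1;\sqrt2\,\bm q_2]$.
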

\begin{proof}
    To prove Equation \eqref{eq:Q_min_thm}, we start by defining the set $\mathcal{Q}_{\min}$ of minimal singular vectors of the matrix $\bm{Q}_1$ as
    \begin{equation}
        \mathcal{Q}_{\min}=\left\{\bm{q} \in \argmin\limits_{\left\Vert \bm{q}\right\Vert_2^2=1} \left\Vert\bm{Q}_1\bm{q}\right\Vert_2^2\right\}.\label{eq:set_min_sing}
    \end{equation}
    Suppose we have a vector $\bm{a}=\left[\bm{a}_1^\top, \bm{a}_2^\top, \bm{a}_3^\top\right]^\top\in \mathcal{Q}_{\min}$, where $\bm{a}_1, \bm{a}_2, \bm{a}_3\in \mathbb{C}^{m+L}$, then
    \begin{align}
        \left\Vert\bm{Q}_1\bm{a}\right\Vert_2^2&\overset{(a)}{=}\left\Vert\begin{bmatrix}
        \bm B_0\bm{a}_1 + \bm B_1\bm{a}_2 \\
        \overline{\bm B_0}\bm{a}_1+\overline{\bm B_1}\bm{a}_3
    \end{bmatrix}\right\Vert_2^2  \overset{(b)}{=}\left\Vert\begin{bmatrix}
        \bm B_0\overline{\bm{a}_1} + \bm B_1\overline{\bm{a}_3}\\
        \overline{\bm B_0}\overline{\bm{a}_1}+\overline{\bm B_1}\overline{\bm{a}_2}
    \end{bmatrix}\right\Vert_2^2
    \nonumber \\
    &\overset{(c)}{=}\left\Vert\begin{bmatrix}
        \bm B_0 & \bm B_1 & \bm 0 \\
        \overline{\bm B_0} & \bm 0 & \overline{\bm B_1}
    \end{bmatrix}\begin{bmatrix}\overline{\bm{a}_1}^\top & \overline{\bm{a}_3}^\top & \overline{\bm{a}_2}^\top\end{bmatrix}^\top\right\Vert_2^2,
    \end{align}
where in (a) we plug in Equation \eqref{eq:Q1}. (b) holds because $ \bm B_0\overline{\bm{a}_1} + \bm B_1\overline{\bm{a}_3}$ is the conjugate of $\overline{\bm B_0}\bm{a}_1+\overline{\bm B_1}\bm{a}_3$ and $\overline{\bm B_0}\overline{\bm{a}_1}+\overline{\bm B_1}\overline{\bm{a}_2}$ is the conjugate of $ \bm B_0\bm{a}_1 + \bm B_1\bm{a}_2$. (c) holds after the matrix multiplication.
    Thus, $\widetilde{\bm{a}}\triangleq\left[\overline{\bm{a}_1}^\top , \overline{\bm{a}_3}^\top , \overline{\bm{a}_2}^\top\right]^\top\in \mathcal{Q}_{\min}$. Hence, if $\bm{a} \in \mathcal{Q}_{\min}$ with singular value $\sigma^2_{\min}(\bm{Q}_1)$, then
    \begin{align}\label{eq:singular_vector_symetry}
        \bm{Q}_1^\herm\bm{Q}_1\left(\bm{a}+\widetilde{\bm{a}}\right)&=\bm{Q}_1^\herm\bm{Q}_1\bm{a}+\bm{Q}_1^\herm\bm{Q}_1\widetilde{\bm{a}} \nonumber
        \\
        &=\sigma^2_{\min}(\bm{Q}_1)\bm{a}+\sigma^2_{\min}(\bm{Q}_1)\widetilde{\bm{a}},
    \end{align}
    which indicates that $\bm{a}+\widetilde{\bm{a}}$ is also a singular vector of $\bm{Q}_1$ with singular value $\sigma_{\min}(\bm{Q}_1)$.
    Therefore, since $\mathcal{Q}_{\min} \neq \emptyset$, there exists $\bm{q}_{\min}\in \mathcal{Q}_{\min}$ of the form
    \(
        \bm{q}_{\min} = \begin{bmatrix}
        \bm{q}_1^\top&\bm{q}_2^\top&\overline{\bm{q}_2}^\top
        \end{bmatrix}^\top
    \)
    for some $\bm{q}_1\in \mathbb{R}^{m+L}$ and $\bm{q}_2\in \mathbb{C}^{m+L}$. This yields
    \begin{align}
         \sigma^2_{\min}(\bm{Q}_1)&\overset{(d)}{=}\left\Vert \bm{Q}_1\bm{q}_{\min}\right\Vert_2^2 \nonumber
         \overset{(e)}{=}\left\Vert\begin{bmatrix}
        \bm B_0\bm{q}_1 + \bm B_1\bm{q}_2\\
        \overline{\bm B_0}\bm{q}_1+\overline{\bm B_1}\overline{\bm{q}_2}
    \end{bmatrix}\right\Vert_2^2 \nonumber
    \\
    &\overset{(f)}{=}\left\Vert\bm{B}_0\bm{q}_1+\bm{B}_1\bm{q}_2\right\Vert_2^2+\left\Vert\overline{\bm B_0}\bm{q}_1+\overline{\bm B_1}\overline{\bm{q}_2}\right\Vert_2^2 \nonumber
    \\
    &\overset{(g)}{=}2\left\Vert\bm{B}_0\bm{q}_1+\bm{B}_1\bm{q}_2\right\Vert_2^2 \nonumber
    \\
    &\overset{(h)}{=}\left\Vert\begin{bmatrix}
        \sqrt{2}\bm B_0 & \bm B_1 \end{bmatrix}\begin{bmatrix}\bm{q}_1^\top & \sqrt{2}\bm{q}_2^\top \end{bmatrix}^\top\right\Vert_2^2
         \nonumber \\
        &\overset{(i)}{\geq} \sigma^2_{\min}(\bm{Q}_0),
    \end{align}
    where (d) holds since $\bm{q}_{\min}\in \mathcal{Q}_{\min}$ defined by Equation \eqref{eq:set_min_sing}; (e) derives from Equation \eqref{eq:Q1}; Equality (f) holds by the additivity property of the $\ell_2$ norm; Equality (g) holds since $\bm{q}_1=\overline{\bm{q}_1} \in \mathbb{R}^{m+L}$ and $\bm{B}_0\bm{q}_1+\bm{B}_1\bm{q}_2$ is the conjugate of $\overline{\bm B_0}\overline{\bm{q}_1}+\overline{\bm B_1}\overline{\bm{q}_2}$; Equality (h) holds after matrix multiplication; and Inequality (i) holds because $\left\Vert \left[\bm{q}_1^\top , \sqrt{2}\bm{q}_2^\top\right] \right\Vert^2_2 = \left\Vert \bm{q}_1 \right\Vert^2_2+2\left\Vert \bm{q}_2 \right\Vert^2_2 = 1$, so that
    \begin{equation}
        \sigma^2_{\min}(\bm{Q}_0) = \min\limits_{\left\Vert \bm{p}\right\Vert_2^2=1} \left\Vert\bm{Q}_0\bm{p}\right\Vert_2^2 \leq \left\Vert\bm{Q}_0\begin{bmatrix}\bm{q}_1^\top & \sqrt{2}\bm{q}_2^\top \end{bmatrix}^\top\right\Vert_2^2.
    \end{equation}
    Inequality~\eqref{eq:Q_max_thm} can be proven via analogous reasoning.
\end{proof}

Under the hypothesis of Proposition~\ref{prop:CRB_bounds} the FIM is given by $\bm J^{\bm{\theta}} = G\sigma^{-2} \bm{U}^\herm \bm{U}$. To proceed with the proof, we let the scaling matrix $\bm{T}$ be defined by
\begin{align}
    \bm{T} = \begin{bmatrix}
        \sqrt{2} C_N \diag(\bm{\alpha}) & \bm 0 & \bm 0 \\
        \bm 0 &\sqrt{N} \bm{I}_{m+L} & \bm 0 \\
        \bm 0 & \bm 0 & \sqrt{N} \bm{I}_{m+L}
    \end{bmatrix}, 
\end{align}
where $C_N = \norm{\bm{\Lambda} \bm{a}(f) }_2 = \sqrt{\tfrac{\pi^2 N (N-1)(N+1)}{3}}$ for all $f \in [0,1)$ is the norm of the derivative of the complex exponential atoms. Furthermore, introduce $\bm Q = \bm U \bm T^{-1}$. With this notation in place, the CRB matrix $\bm{\Theta}_{\mathrm{CRB}}$ can be written as $\bm{\Theta}_{\mathrm{CRB}} = \left(\bm{J}^{\bm{\theta}}\right)^{-1} =\left(\sigma^2/G\right) \bm{T}^{-1} \left(\bm{Q}^\herm \bm{Q} \right)^{-1}  \left(\bm{T}^{\herm}\right)^{-1}$. The CRB on a specific frequency $\mathrm{CRB}(f_i)$ or sparse channel amplitudes $\mathrm{CRB}(\alpha_i)$ and diffuse channel amplitudes $\mathrm{CRB}(\gamma_r)$ are given by the respective diagonal entries of $\bm{\Theta}_{\mathrm{CRB}}$.
\begin{align}\label{eq:bound_diag_entries}
    \left({\bm{\Theta}_{\mathrm{CRB}}}\right)_{j,j} = \sigma^2 \left(\bm{T}_{j,j}\right)^{-2} \left[\left( \bm Q^\herm \bm Q \right)^{-1}\right]_{ j,j}.
\end{align}

Denote by $\lambda_{\max}(\bm M)$ and $\lambda_{\min}(\bm M)$ the largest and smallest eigenvalues of a Hermitian positive matrix $\bm M$, respectively. Since $\lambda_{\min}(\bm M) \leq M_{i,i} \leq \lambda_{\min}(\bm M)$ for any diagonal entries $M_{j,j}$ of the matrix $\bm M$, Equation \eqref{eq:bound_diag_entries} induces,
\begin{multline}\label{eq:CRB_bound_expand}
         \frac{\sigma^2} {\left(\bm{T}_{j,j}\right)^{2}}  \lambda_{\min}\left(\left(\bm Q^\herm \bm{Q}\right)^{-1}\right) \leq {\bm{\Theta}_{\mathrm{CRB}}}\vert_{j,j} \\
         \leq \frac{\sigma^2} {\left(\bm{T}_{j,j}\right)^{2}} \lambda_{\max}\left(\left(\bm Q^\herm \bm{Q}\right)^{-1}\right),
\end{multline}
for all $j \in \llbracket 1, 3m +2L \rrbracket$, or equivalently
\begin{align}
    \sigma^2  \lambda_{\max}^{-1}(\bm Q^\herm \bm{Q})\leq \left(\bm{T}_{j,j}\right)^{2}\left({\bm{\Theta}_{\mathrm{CRB}}}\right)_{j,j} \leq \sigma^2  \lambda_{\min}^{-1}(\bm Q^\herm \bm{Q}).
\end{align}
To complete the proof, it remains to establish
\begin{subequations}\label{eq:eigen_Q}
    \begin{align}
    \lambda_{\min}(\bm Q^\herm \bm Q) &\geq K_{\max}^{-1} \\
    \lambda_{\max}(\bm Q^\herm \bm Q) &\leq K_{\min}^{-1},
    \end{align}
\end{subequations}
and one can obtain the desired bounds in Equation \eqref{eq:CRB_bounds_theo} by substituting Equation \eqref{eq:eigen_Q} into Equation \eqref{eq:CRB_bound_expand}.

To show that Equation \eqref{eq:eigen_Q} holds, we harness the confluent Vandermonde structure of the matrix $\bm Q$~\cite{gautschi1962inverses} and provide an upper and lower bound of its spectrum as a function of the separation criterion $N\delta$ through a Beurling--Selberg type approximation of the rectangle function by bandlimited functions presented in~\cite{vaaler1985some}. For convenience, denote
$\bm B = {\sqrt{N}}^{-1} [\bm A_{\bm{f}}, \bm D]$, $\bm B^\prime =  C_N^{-1} [\bm \Lambda \bm A_{\bm{f}}, \bm \Lambda \bm D]$, and define
\begin{align}
    \widetilde{\bm Q} = \begin{bmatrix}
        \tfrac{1}{\sqrt{2}}\bm B^\prime & \bm B & \bm 0 \\
        \tfrac{1}{\sqrt{2}}\overline{\bm B^\prime} & \bm 0 & \overline{\bm B}
    \end{bmatrix} \in \mathbb{C}^{2N \times 3(m+L)}.
\end{align}

Since the columns of $\bm{Q}$ are included in those of $\widetilde{\bm Q}$, then
\begin{subequations}\label{eq:bounds_Q}
 \begin{align}
    \lambda_{\min}(\bm{Q}^\herm\bm{Q}) &\geq \lambda_{\min}(\widetilde{\bm{Q}}^\herm\widetilde{\bm{Q}}) = \sigma_{\min}^2(\bm{\widetilde{Q}}) \\
    \lambda_{\max}(\bm{Q}^\herm\bm{Q}) &\leq \lambda_{\max}(\widetilde{\bm{Q}}^\herm\widetilde{\bm{Q}})= \sigma_{\max}^2(\bm{\widetilde{Q}}).
\end{align}   
\end{subequations}

Equations~\eqref{eq:eigen_Q} and~\eqref{eq:bounds_Q} suggests that it is sufficient to show $\sigma_{\max}^2({\bm{\widetilde{Q}}}) \leq K_{\min}^{-1}$ and  $\sigma_{\min}^2({\bm{\widetilde{Q}}}) \geq K_{\max}^{-1}$, which can be done by invoking Lemma~\ref{lemma:singular_bound} and~\cite[Theorem 5]{ferreira2023conditionNumber}.

\renewcommand*{\bibfont}{\normalfont\footnotesize}
\printbibliography
\end{document}